\newcommand{\nopappendix}[1]{}
\newcommand{\nopappendixbib}[1]{}
\newcommand{\nn}{\mathbb{N}}
\newcommand{\bigO}[1]{\ensuremath{{\mathcal O}(#1)}}
\theoremstyle{plain}
\newtheorem{theorem}{Theorem}%
\newtheorem{lemma}{Lemma}
\newtheorem{proposition}{Proposition}
\newtheorem{corollary}{Corollary}
\newtheorem{prob}{Problem}
\newtheorem{observation}{Observation}
\theoremstyle{definition}
\newtheorem{definition}{Definition}
\theoremstyle{remark}
\newtheorem*{claim}{Claim}
\newenvironment{cproof}{\vspace{-0.5cm}\paragraph{\normalfont\textit{Proof of Claim:}}}{\hfill$\blacksquare$}
\newcommand{\ceil}[1]{{\lceil #1 \rceil}}
\newcommand{\cc}[1]{{\mbox{\textnormal{\textsf{#1}}}}\xspace}  %
\newcommand{\NP}{\cc{NP}}
\newcommand{\FPT}{\cc{FPT}}
\newcommand{\paraNP}{\cc{paraNP}}
\renewcommand{\phi}{\varphi}
\newcommand{\yes}{\textsc{Yes}}
\newcommand{\PBMP}{\textup{P-BMP}\ensuremath{^e}}
\newcommand{\BMP}{\textup{BMP}\ensuremath{^e}}
\newcommand{\probeset}{{\mathcal S}}
\newcommand{\neighbor}{{\mathcal N}}
\newcommand{\embed}{{\varepsilon}}
\newcommand{\placement}{{\phi}}
\newcommand{\borderlen}{\textup{BL}}
\newcommand{\border}{\textup{border}}
\newcommand{\bd}{\textup{bd}}
\newcommand{\mask}{{\mathcal M}}
\newcommand{\verticalcost}{\textup{cost}_\textit{v}}
\newcommand{\horizontalcost}{\textup{cost}_\textit{h}}
\newcommand{\cT}{{\mathcal T}}
\newcommand{\cI}{{\mathcal I}}
\newcommand{\comment}[1]{}
\title{%
Parameterized Complexity of Asynchronous Border Minimization\thanks{Supported by the Austrian Science Fund (FWF): P25518-N23 and P26696, and the German Research Foundation (DFG) under grant ER 738/2-1.}
} %
\author{Robert Ganian$^1$, Martin Kronegger$^1$, Andreas Pfandler$^{1,2}$,\\ and Alexandru Popa$^3$ \vspace{0.4em} \\\normalsize$^1$Vienna University of Technology, Vienna, Austria\\[-4pt] \small \texttt{firstname.lastname@tuwien.ac.at} \vspace{0.4em}\\\normalsize$^2$University of Siegen, Siegen, Germany  \vspace{0.4em} \\\normalsize$^3$Nazarbayev University, Astana, Kazakhstan\\[-4pt] \small \texttt{alexandru.popa@nu.edu.kz}}
\date{March, 2015}
\begin{document}
\maketitle

\begin{abstract}
\noindent{\normalfont\sectfont Abstract.}\ \ Microarrays are research tools used in gene discovery as well as disease and cancer diagnostics. Two prominent but challenging problems related to microarrays are the \emph{Border Minimization Problem} (BMP) and the \emph{Border Minimization Problem with given placement} (\mbox{P-BMP}).

In this paper we investigate the parameterized complexity of natural variants of BMP and P-BMP, termed $\BMP$ and $\PBMP$ respectively, under several natural parameters.  We show that $\BMP$ and $\PBMP$ are in FPT under the following two combinations of parameters: $1$) the size of the alphabet ($c$), the maximum length of a sequence (string) in the input ($\ell$) and the number of rows of the microarray ($r$); and, $2$) the size of the alphabet and the size of the border length ($o$). Furthermore, $\PBMP$ is in FPT when parameterized by $c$ and $\ell$. We complement our tractability results with corresponding hardness results.

\end{abstract}

\section{Introduction}

DNA and peptide microarrays~\cite{CC09,GR+99} are important research tools
used in gene discovery, multi-virus discovery
as well as disease and cancer diagnosis.
Apart from measuring the amount of gene expression~\cite{ST+00},
microarrays are an efficient tool for making a qualitative
statement about the presence or absence of biological target sequences
in a sample. For example, peptide microarrays are used for detecting tumor biomarkers~\cite{CMI+06,MES+04,WSK+03}.

A microarray is a plastic or glass slide consisting of thousands of
sequences of nucleotides called \emph{probes}  that are assigned to one cell in the array.
The synthesis process~\cite{FR+91} consists of two components:
\emph{probe placement} and \emph{probe embedding}.
In the probe placement, the goal is to determine an assignment of each probe to a unique cell of the array.
If the placement is given one has to create the sequences at their respective cells (probe embedding).
This can be achieved with help of the following two operations: It is possible to \emph{mask} a certain set of cells. Furthermore, one can \emph{append} a certain nucleotide to the probes in all those cells which are currently unmasked.
Essentially, the nucleotides are represented as characters and the probes as strings. In probe embedding we want to find a common supersequence of all probes,
called the \emph{deposition sequence},
and a sequence of 2D arrays describing the masks.
The cells of a mask can be either masked (opaque) or
unmasked (transparent) allowing the deposition of the nucleotide associated with the mask.
For any cell, the concatenation of the nucleotides for which the cell is transparent
has to match the probe in that cell of the microarray.
See Figure~\ref{fig:masksub1} for an example~\cite{LWY+08}.

Due to diffraction, %
the cells on the \emph{border} between the masked and the unmasked regions
are often subject to unintended illumination~\cite{FR+91},
and can compromise experimental results.
Therefore, 
unintended illumination should be minimized.
The magnitude of unintended illumination can be measured by the
\emph{border length} of the masks used,
which is the number of borders shared between masked and
unmasked regions,
e.g., in Figure~\ref{fig:masksub1}, the border
length of $\mask_1,\mask_3,\mask_4$ is $2$ and $\mask_2$ is~$4$ which yields a total border length of $10$.

\begin{figure}[t]
\begin{center}
 {
  \psfrag{S}{\footnotesize{$S = ACGTA$}}
  \psfrag{e1}{\footnotesize{$e_1 = $}}
  \psfrag{e2}{\footnotesize{$e_2 = $}}
  \psfrag{e3}{\footnotesize{$e_3 = $}}
  \psfrag{e4}{\footnotesize{$e_4 = $}}
  \psfrag{e5}{\footnotesize{$e_5 = $}}
  \psfrag{M1}{\scriptsize{$\mask_1$}}
  \psfrag{M2}{\scriptsize{$\mask_2$}}
  \psfrag{M3}{\scriptsize{$\mask_3$}}
  \psfrag{M4}{\scriptsize{$\mask_4$}}
  \psfrag{M5}{\scriptsize{$\mask_5$}}
  \psfrag{A}{\tiny{A}}
  \psfrag{C}{\tiny{C}}
  \psfrag{G}{\tiny{G}}
  \psfrag{T}{\tiny{T}}
  \psfrag{AC}{\tiny{AC}}
  \psfrag{CA}{\tiny{CA}}
  \psfrag{CT}{\tiny{CT}}
  \psfrag{TA}{\tiny{TA}}
  \psfrag{unmasked region}{\tiny{unmasked region}}
  \psfrag{masked region}{\tiny{masked region}}

  \psfrag{p}{\footnotesize{$p =$ CT}}
  \psfrag{1}{\footnotesize{(a)}}
  \psfrag{2}{\footnotesize{(b)}}
  \psfrag{3}{\footnotesize{(c)}}
  \psfrag{4}{\footnotesize{(d)}}
  \psfrag{1}{}
  \psfrag{2}{}
  \psfrag{3}{}
  \psfrag{4}{}
  \psfrag{a}{\scriptsize{A}}
  \psfrag{c}{\scriptsize{C}}
  \psfrag{g}{\scriptsize{G}}
  \psfrag{t}{\scriptsize{T}}
  \psfrag{S}{\scriptsize{$D$}}
  \psfrag{E1}{{$\embed_1$}}
  \psfrag{E2}{{$\embed_2$}}
  \psfrag{E3}{{$\embed_3$}}
  \psfrag{E4}{{$\embed_4$}}

  \includegraphics[width=7cm]{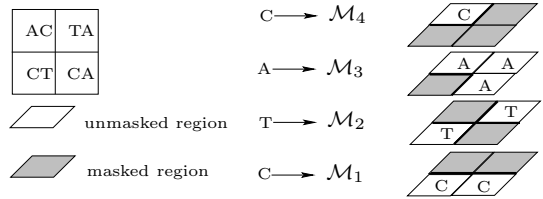}
  \label{fig:masksub1}
 }
   \caption{\small Asynchronous synthesis of a $2 \times 2$ microarray.
The deposition sequence $\mathcal{D} =$ CTAC
corresponds to %
four masks $\mask_1$, $\mask_2$, $\mask_3$,
and $\mask_4$.
The masked regions are shaded and the border between the masked and unmasked regions is
represented by bold lines.
}
  \label{fig:mask}
\end{center}
\end{figure}

The problem of finding both the placement and the embedding is termed the Border Minimization Problem (BMP). If the placement is given and the task is to find only the embedding, we speak of P-BMP. We refer the reader to Section~\ref{sec:prelim} for formal definitions of BMP and P-BMP.

\paragraph*{{\sc Variants of border minimization.}}
In this paper we consider the exhaustive variants of BMP and P-BMP, termed $\BMP$ and $\PBMP$ respectively.
The difference is that in $\PBMP$ (and, consequently, in $\BMP$)  we assume that a mask is always applied exhaustively (we call this the \emph{exhaustive rule}). More precisely, when a mask that synthesizes a character $c$ is applied,  the mask has a transparent cell wherever the corresponding sequence begins with the character~$c$.

Without this assumption it is possible to artificially increase the length of the deposition sequence which, as a consequence, also increases the length of the sequence of masks. 
In most application scenarios this is undesirable, since
applying a mask requires an additional cycle of work that causes a waste of material and can also introduce new errors. A second advantage of these exhaustive variants is that they allow the concise description of solutions: a solution to $\PBMP$ is fully characterized by the deposition sequence, while for P-BMP it is also necessary to explicitly describe each mask in the sequence. To clarify, we remark that an optimal exhaustive solution need not always be an optimal solution for P-BMP (or BMP): there are cases where the border length can increase.

We illustrate the usefulness of the assumption by a simple example. In the $\PBMP$ instance $a|b|a$, this assumption indeed helps to reduce the number of masks without increasing the border length. A non-exhaustive optimal solution might work on the left $a$ first, while an exhaustive optimal solution works on both $a$ concurrently. Even though the border length is in both cases $4$, the non-exhaustive case could require an additional mask.

\begin{table}[t!]\centering
	\begin{tabular}{l|l|l|l|l|l|}%
		& $c$ or $c,r$ & $c,\ell$ & $c,\ell,r$ & $c,o$ \\ 
		\hline 
		\hline
		\PBMP & \paraNP-h (Prop.~\ref{prop:pbmp-paraNP}) & FPT (Prop.~\ref{prop:PBMPcl}) & FPT (Prop.~\ref{prop:PBMPcl}) & FPT (Thm.~\ref{thm:PBMPco}) \\
		\BMP & \paraNP-h (Thm.~\ref{thm:bmp-paraNP}) & \multicolumn{1}{c|}{open} & FPT (Thm.~\ref{thm:BMPclr}) & FPT (Thm.~\ref{thm:BMPco}) \\
		\hline
	\end{tabular}
	\caption{Overview of results.}
  \label{tab:results}
\end{table}

\paragraph*{{\sc Our results.}}

Our results are summarized in Table~\ref{tab:results}.
In this paper we investigate the parameterized complexity of the $\BMP$ and $\PBMP$ problems under several natural parameters. First of all,  throughout this work we consider the number of available nucleotides $c$ (i.e., the alphabet size) as a parameter. Notice that this assumption does not impose a serious restriction, since in practice the number of available nucleotides is very limited (or even constant). 
Orthogonal to this assumption we explore the parameterized complexity of the $\BMP$ and $\PBMP$ problem with respect to three natural parameters, i.e., the maximum length of a sequence in the array ($\ell$), the maximum border length cost ($o$), and the maximum number of rows in the array ($r$). Since errors become more likely as the length of the sequence grows, the length of the constructed probes will be rather limited. Notice that the parameter $o$ models the cost of a solution and hence is also a natural parameter. Finally, with the maximum number of rows $r$ the shape of the array is restricted in the sense that the one dimension does not grow arbitrarily. This is, in particular, interesting because it allows to generalize from the one-dimensional case studied in~\cite{PopaWY12}.

More precisely, we show fpt-algorithms for $\BMP$ and $\PBMP$ if we are given either $c,\ell,r$ or $c,o$ as parameters. We complement these results with parameterized intractability results, i.e., by showing \paraNP-hardness. We use a polynomial time reduction from $\PBMP$ to $\BMP$ to build upon the result that $\PBMP$ parameterized by $c$ and $r$ is \paraNP-hard\footnote{Although in~\cite{PopaWY12} only \NP-hardness is proven for P-BMP, the reduction can also be used to show \paraNP-hardness for $\PBMP$ when parameterized by $c$ and $r$.} and obtain hereby \paraNP-hardness for $\BMP$ parameterized by $c$ and $r$. Notice that with the exception of $\BMP$ parameterized by $c$ and $\ell$, we obtain a full parameterized complexity map of the two considered problems with respect to all additional parameters considered in this paper. We furthermore provide a reduction relating the complexity of $\BMP$ parameterized  by $c$ and $\ell$ to $k$-\textsc{Balanced Partition} on grids, a well-studied problem whose parameterized complexity on grids is open (Proposition~\ref{prop:BMPclred}).

The rest of the paper is organized as follows. In Section~\ref{sec:prelim} we introduce the problems formally and give preliminaries. Then, in Section~\ref{sec:hardness} we show the reduction from $\PBMP$ to $\BMP$. Section~\ref{sec:fpt} introduces the fpt-algorithms and, finally, in Section~\ref{sec:conclusions} we present conclusions and open problems.

\section{Preliminaries}
\label{sec:prelim}

For $n\in\mathbb{N}$, we use $[n]$ to denote the set $\{1,\ldots,n\}$. For two sequences $s_1, s_2$, we use $s_1 \cdot s_2$ to mark their concatenation. 

The microarray has size $r \times m$, where $r$ is the number of rows and $m$ is the number of columns. The multiset of input sequences (also called \emph{probes}) is denoted by $\probeset = \{s_1, s_2,\ldots, s_{r\cdot m}\}$ and the input alphabet by $\Sigma$. Moreover, let $c = |\Sigma|$. For any sequence $s_i$,
we denote the length of the sequence by $\ell_i$ and the $t$-th character of a sequence $s_i$ by $s_i[t]$. We use $\ell$ for the maximum length of the probes, i.e., $\ell=\max_{i\in [r\cdot m]} \ell_i$. Two cells of the array $v_1=(x_1, y_1)$ and $v_2=(x_2, y_2)$ are said to be {\em neighbors} if $|x_1 - x_2| + |y_1 - y_2| = 1$. For each cell $v$, we denote the set of neighbors of $v$ by $\neighbor(v)$.

In order to give the formal definition of BMP, we introduce several notions related to the synthesis process.

\begin{definition}
A {\em placement} of the probe sequences is a
bijective function $\placement$
that maps each probe sequence
to a unique cell in the array.
\end{definition}

\begin{definition}
A {\em deposition sequence} $D$ for a set of sequences $\probeset$ is a sequence of characters which is a common supersequence of all sequences in $\probeset$.
\end{definition}

\begin{definition}
An {\em embedding} of a sequence $s_i$ into a deposition sequence $D$ is a length-$|D|$ sequence $\embed_i$ over alphabet $\Sigma\cup \{-\}$ such that:
\begin{enumerate}
\item $\embed_i$ contains precisely $|s_i|$ characters other than ``$-$'' occurring at positions $\embed_i[u_1],$ $\embed_i[u_2], \dots, \embed_i[u_{|s_i|}]$,
\item $u_1$ is the minimum position such that $\embed_i[u_1]=s_i[1]$,
\item for $2\leq j\leq |s_i|$, $u_j$ is the minimum position such that $\embed_i[u_j]=s_i[j]$ and $u_{j-1}<u_j$.
\end{enumerate}
\end{definition}

Informally, $\embed_i$ captures how a sequence is built (or, equivalently, deleted) by the deposition sequence; notice that due to the exhaustive rule, the embedding is uniquely determined by the deposition sequence.
An {\em embedding} of a set of probes $\probeset$ into a deposition sequence $D$ is then denoted by $\embed_D=\{\embed_1, \embed_2, \ldots, \embed_{|\probeset|} \}$. Note that we will drop the subscript when the associated deposition sequence is clear from the context. The final key notion we need are masks. 

\begin{definition}
A mask $\mask$ (for some character $c$) is a 2D-array such that $\mask(i,j)$ is either $c$ or a space ``$-$'' (here the space means that the character is not deposited into this cell). 
\end{definition}

The sequence of masks associated with a deposition sequence $D$ and a placement $\placement$ is $\omega=\mask_1, \dots, \mask_{|D|}$ where $\mask_i(a,b)=\embed_{\placement^{-1}(a,b)}[i]$ for $i\in [|D|]$. Notice that due to the exhaustive rule, a mask for character $c$ is always maximal with respect to $c$, i.e., there is no ``$-$'' in the mask that could be replaced by $c$. We introduce now the \emph{border length} of a given placement of the probes in the array, which is the value we aim to optimize.

\begin{definition}

Let $\border_{D}(s_i,s_j)$ be the Hamming distance between $\embed_i$ and $\embed_j$ (with respect to deposition sequence $D$).
The {\em border length} of a placement $\placement$
and a deposition sequence $D$ is then defined as
the sum of borders over all pairs of neighboring probe sequences
\begin{equation}
 \borderlen(\placement, D)=
   \displaystyle \sum_{\scriptsize
    \begin{array}{c}
    \forall i,j\in \mathbb{N}: i<j<|\probeset|\\
    \wedge~\placement(s_j) \in \neighbor(\placement(s_i))
    \end{array}
    } \border_{D}(s_i, s_j).
 \label{eq:bmpcost}
\end{equation}
\end{definition}

We can also equivalently define border length in terms of the border length of all the masks.
\begin{definition}

For any mask $\mask$ of deposition character $x$,
the border length of $\mask$, denoted by $\borderlen(\mask)$,
is defined as
the number of pairs of neighboring cells $(i_1,j_1)$ and $(i_2,j_2)$
such that $\mask(i_1,j_1)=x$ and $\mask(i_1,j_1) \not= \mask(i_2,j_2)$.
For a placement and deposition sequence that corresponds to
a sequence of masks $\mask_1$, $\mask_2$, $\cdots$, $\mask_{|D|}$, we let
\begin{equation}
 \borderlen(\placement, D)=
  \sum_{h=1}^{|D|} \borderlen(\mask_h)
 \label{eq:maskcost}
\end{equation}
\end{definition}

The $\BMP$ and the $\PBMP$ problem are defined as follows.
\begin{prob}
In the $\BMP$ problem, we are given $r,m\in \mathbb{N}$ and a multiset of $r\cdot m$ sequences $\probeset$. The objective is to find a placement $\placement$ and
a deposition sequence $D$ so that $\borderlen(\placement, D)$
is minimized.
\end{prob}

\begin{prob}
In the $\PBMP$ problem, we are given $r,m\in \mathbb{N}$ and a multiset of $r\cdot m$ sequences $\probeset$ and a placement $\placement$. The objective is to find a deposition sequence $D$ so that $\borderlen(\placement, D)$ is minimized.
\end{prob}

For a set $\pi\subseteq\{c,r,\ell,o\}$, we denote by $\BMP_\pi$ ($\PBMP_\pi$) the $\BMP$ ($\PBMP$) problem parameterized by $\pi$. For a problem $\BMP_\pi$ ($\PBMP_\pi$) where $o\in \pi$, we assume that an upper bound on the border length $o$ is additionally given in the input and only solutions with minimum border length $\leq o$ are admitted.

We conclude this section with some useful observations. A deposition sequence $D$ is called \emph{redundant} if it contains a character $D[i]$ such that $\embed_j[i]=``-$'' for each $\embed_j\in \embed$. Note that for any redundant deposition sequence $D$ and any placement $\placement$, it holds that $\borderlen(\placement,D)=\borderlen(\placement,D')$, where $D'$ is obtained by deleting the redundant character $D[i]$. We say that a deposition sequence $D$ is \emph{good} if it is not redundant.

\begin{observation}
\label{obs:good}
Let $(\placement, D)$ be such that $\borderlen(\placement, D)$ is minimized for some $(\probeset, r, m)$. If $D$ is redundant, then there exists a subsequence $D'$ of $D$ such that $\borderlen(\placement, D')=\borderlen(\placement, D)$ and $D'$ is good.
\end{observation}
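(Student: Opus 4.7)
The plan is to construct $D'$ by iteratively deleting redundant positions from $D$ and to verify that each such deletion preserves both the border length and the property of being a common supersequence of $\probeset$. Since every deletion strictly decreases $|D|$, the process terminates with a good deposition sequence.

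First I would observe that if $D[i]$ is redundant, then the mask $\mask_i$ at position $i$ (determined by $\placement$ and $D$) is entirely blank: by the definition of redundancy $\embed_j[i] = $ ``$-$'' for every $\embed_j \in \embed$, hence $\mask_i(a,b) = $ ``$-$'' for every cell $(a,b)$, and so $\borderlen(\mask_i) = 0$. Next, the sequence $D'$ obtained from $D$ by removing position $i$ remains a common supersequence of $\probeset$, because no probe consumed a character at position $i$. Under the exhaustive rule, the embedding of $s_j$ into $D'$ is precisely the embedding $\embed_j$ into $D$ with position $i$ deleted: the leftmost-match choice made by the exhaustive rule at each position $k \neq i$ depends only on the prefix of $D$ (and of $s_j$) up to position $k$, and since $D[i]$ was not used by any probe, removing it does not change any of these greedy choices. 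Consequently the masks $\mask_1, \ldots, \mask_{i-1}, \mask_{i+1}, \ldots, \mask_{|D|}$ are unchanged, and Equation~\ref{eq:maskcost} together with $\borderlen(\mask_i) = 0$ yields $\borderlen(\placement, D') = \borderlen(\placement, D)$.

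Finally, I would iterate the above deletion until no redundant positions remain. At each step $|D|$ strictly decreases while the border length and the common-supersequence property are preserved; the process therefore terminates with a good subsequence $D'$ of $D$ satisfying $\borderlen(\placement, D') = \borderlen(\placement, D)$, as claimed. The only subtle point — and the main (minor) obstacle — is justifying that the exhaustive embeddings into $D'$ coincide with the embeddings into $D$ with position $i$ erased; but this is an immediate consequence of the leftmost-match definition of $\embed_j$, since dropping an unused position in $D$ cannot affect any subsequent greedy choice.
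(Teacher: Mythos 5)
Your proposal is correct and follows essentially the same route as the paper, which justifies the observation via the remark immediately preceding it (deleting a redundant character leaves all embeddings, and hence the border length, unchanged) and implicitly iterates this deletion. Your additional care in checking that the leftmost-match embeddings and the remaining masks are literally preserved after the deletion is a sound elaboration of the same argument, not a different one.
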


As a consequence, when searching for optimal solutions of these problems it suffices to consider only good deposition sequences. Aside from the trivial (quadratic) algorithm for computing the border length for a fixed deposition sequence and placement, we will utilize another algorithm which will in some cases yield better running times:

\begin{proposition}
\label{prop:linear}
For any given $(\placement, D, \probeset, r, m)$, there exists an algorithm which computes $\borderlen(\placement, D)$ in time $\bigO{|\probeset|+p^2\cdot |D|}$, where $p$ is the number of distinct sequences in $\probeset$.
\end{proposition}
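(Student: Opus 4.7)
The plan is to exploit that only $p$ distinct probes occur in $\probeset$, so there are at most $\binom{p}{2}+p$ distinct unordered probe pairs contributing to $\borderlen(\placement,D)$; accordingly, the algorithm computes each pairwise Hamming distance exactly once and reuses it across every adjacency in which that same pair of probes appears. This is the only way to beat the naive $\bigO{|\probeset|\cdot|D|}$ bound when $p$ is much smaller than $|\probeset|$.

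First, for every distinct probe $s$, I would compute its embedding $\embed_s$ into $D$. Because the exhaustive rule makes $\embed_s$ uniquely determined by $D$ and $s$, a single greedy left-to-right scan produces $\embed_s$ in $\bigO{|D|}$ time; over all $p$ distinct probes this costs $\bigO{p\cdot|D|}$. Second, for every unordered pair $(u,v)$ of distinct probes, I compute and store $h(u,v)=\border_D(u,v)$ by scanning the two embeddings in parallel in $\bigO{|D|}$ time, for a total of $\bigO{p^2\cdot|D|}$ across the $\bigO{p^2}$ pairs.

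Finally, I sweep the grid: for each of the $\bigO{|\probeset|}$ pairs of neighboring cells, I look up the identifiers of the two probes placed there and add the precomputed $h(\cdot,\cdot)$ to an accumulator. By Equation~(\ref{eq:bmpcost}), this accumulator equals $\borderlen(\placement,D)$, so the overall running time is $\bigO{p\cdot|D|}+\bigO{p^2\cdot|D|}+\bigO{|\probeset|}=\bigO{|\probeset|+p^2\cdot|D|}$.

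The main subtlety — and the only obstacle beyond bookkeeping — is associating each cell with its distinct-probe identifier in $[p]$ within the stated budget; I would handle this already during the first phase, by keying distinct embeddings into a hash table (or, equivalently, by a bucket-sort pass that groups identical probes), so that every cell whose probe has been seen before inherits the existing identifier and later incurs only $\bigO{1}$ lookup cost during the grid sweep.
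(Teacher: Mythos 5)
Your proposal is correct and follows essentially the same route as the paper's proof: deduplicate the probes into $p$ representatives with a mapping from cells to representative identifiers, precompute all $\bigO{p^2}$ pairwise values $\border_D(\cdot,\cdot)$ in $\bigO{p^2\cdot|D|}$ time, and then sweep the $\bigO{|\probeset|}$ neighboring pairs summing the cached values via Equation~(\ref{eq:bmpcost}). Your extra remarks on explicitly computing the embeddings first and on assigning identifiers by hashing are harmless elaborations of details the paper leaves implicit.
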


\begin{proof}
The algorithm proceeds in four steps. First, in time $O(|\probeset|)$ it finds all unique sequences in $\probeset$ and stores them in a set $Q$ along with a mapping $\eta:S\rightarrow Q$ which maps sequences from $S$ to their representative in $Q$. Second, in time $O(p^2\cdot |D|)$ it computes and stores $\border_D(q_1,q_2)$ for each  $q_1, q_2\in Q$. Third, in time $O(|\probeset|)$ for each sequence $s\in \probeset$ it computes the set $R_s=\placement^{-1}(\neighbor(\placement(s)))$ of neighboring sequences. Finally, in time $O(|\probeset|)$ it  computes $\frac{1}{2}\displaystyle \sum_{\scriptsize
    \forall s\in\probeset, r\in R_s
    } \border_{D}(\eta(s),\eta(r))$
    which is easily seen to be equal to $\borderlen(\placement,D)$. 
\end{proof}

\subsection{Parameterized Complexity}
Parameterized algorithmics is a promising approach to obtain efficient algorithms for fragments of computationally hard problems.
The aim is to find a parameter that describes the structure of the instance such that the combinatorial explosion can be confined to this parameter.
In a parameterized complexity analysis the runtime of an algorithm is studied with respect to the input size~$n$ and a parameter $k\in\nn$ (or a combination of parameters).
For a more detailed introduction we refer to the literature~\cite{DowneyFellows99,FlumGrohe06%
}.

Formally, a \textit{parameterized problem} is a subset of $\Sigma^*\times\nn$, where $\Sigma$ is the input alphabet. If a combination of parameters $k_1,\ldots, k_l$ is considered, the second component of an instance $(x,k)$ is given by $k=\sum_{1\leq i\leq l}k_i$.
The class \FPT (\textit{fixed-parameter tractable}) contains all problems that can be decided by an algorithm running in $f(k)\cdot n^{\bigO{1}}$ time, where $f$ is a computable function and $n$ is the input size.
Such algorithms are often called fixed-parameter tractable (fpt).  

Let $L_1$ and $L_2$ be parameterized problems, with $L_1\subseteq \Sigma_1^*\times\nn$ and $L_2\subseteq \Sigma_2^*\times\nn$. A \textit{parameterized reduction} (or fpt-reduction) from $L_1$ to $L_2$ is a mapping $P:\Sigma_1^*\times\nn\rightarrow\Sigma_2^*\times\nn$ such that \begin{inparaenum}[(1)]   \item $(x,k)\in L_1$ iff $P(x,k)\in L_2$;   \item the mapping can be computed by an fpt-algorithm with respect to parameter $k$;   \item there is a computable function $g$ such that $k'\leq g(k)$, where $(x',k')=P(x,k)$. \end{inparaenum}

There is a variety of classes capturing \textit{parameterized intractability}.
For our results, we require only the class \paraNP~\cite{FlumGrohe03}, which is defined as the class of problems that are solvable by a nondeterministic Turing-machine in fpt-time.
We will make use of the characterization of \paraNP-hardness given by Flum and Grohe~\cite{FlumGrohe06}, 
Theorem 2.14: any parameterized problem that remains \NP-hard when the parameter is set to some constant is \paraNP-hard.
Showing \paraNP-hardness for a problem rules out the existence of an fpt-algorithm under the usual complexity theoretic assumptions.

\section{Hardness}
\label{sec:hardness}

In this section we overview and present new (parameterized) intractability results for $\BMP$ and $\PBMP$ with respect to several combinations of parameters. As our starting point, we notice that the \NP-hardness proof for P-BMP of Popa, Wong and Yung \cite{PopaWY12} can be straightforwardly adapted to $\PBMP_{c,r}$.

\begin{proposition}[{cf.~\cite[Theorem~1]{PopaWY12}}]\label{prop:pbmp-paraNP}
  $\PBMP_{c,r}$ is \paraNP-hard.
\end{proposition}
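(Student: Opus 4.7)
The plan is to invoke the characterization of \paraNP-hardness cited just before the statement (Flum and Grohe, Theorem~2.14): it suffices to exhibit constants $c_0$ and $r_0$ such that the unparameterized $\PBMP$ problem remains \NP-hard on instances whose alphabet has size $c_0$ and whose microarray has $r_0$ rows. The Popa--Wong--Yung reduction for P-BMP already targets a DNA-style fixed alphabet and a fixed (small) number of rows, so its output instances automatically satisfy these shape constraints, and only the adaptation from P-BMP to its exhaustive variant $\PBMP$ has to be checked.

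Concretely, I would proceed in three steps. First, I would recall the Popa--Wong--Yung construction and pin down the explicit constants $c_0, r_0$ it uses, together with the source \NP-hard problem $X$ it reduces from and the threshold $B$ that separates \yes- and \no-instances. Second, I would argue that for the instances produced by this reduction the optimum value of $\borderlen(\placement,D)$ is the same under the exhaustive rule as under the unrestricted rule. By Observation~\ref{obs:good} we may restrict attention to good deposition sequences, so no character of $D$ is wasted; moreover, postponing the processing of a cell whose current probe character matches the current deposition symbol can only split a previously contiguous transparent region into several pieces, which never decreases $\borderlen(\mask_h)$ for any mask $\mask_h$. Hence any optimal non-exhaustive solution can be rewritten exhaustively without increasing the total border length, and the reduction's correctness proof carries over verbatim. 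Third, reading off the constants $c_0,r_0$ and invoking Theorem~2.14 completes the argument.

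The main obstacle is exactly the second step: even though the intuition above is clean, one has to make sure that the characters and probe lengths chosen in the Popa--Wong--Yung instances do not create a gap between the two optima (there is in general such a gap, as the paper explicitly notes in the introduction). If a direct inspection reveals that some gadget behaves differently under the exhaustive rule, the standard fix is to pad each probe with a short block of fresh sentinel symbols that forces every ``decision'' to be made simultaneously on the affected cells; this increases the alphabet and the threshold by bounded additive amounts, preserving both the \NP-hardness reduction and the constant-parameter property. In either case we obtain fixed $c_0, r_0$ for which $\PBMP$ is \NP-hard, and the characterization of Flum and Grohe then yields \paraNP-hardness of $\PBMP_{c,r}$.
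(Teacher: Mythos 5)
Your overall skeleton matches the paper's: fix the alphabet size and the number of rows in the Popa--Wong--Yung construction to constants, argue that the exhaustive restriction does not change the optimum on those instances, and invoke the Flum--Grohe characterization. The paper's proof indeed observes that the reduction uses only $3$ characters and that all rows below the fifth are dummy rows, and then settles the exhaustive-vs.-non-exhaustive issue by citing Lemma~2 of \cite{PopaWY12}, which shows that the optimal solutions used in that reduction's correctness proof are themselves exhaustive.

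The genuine gap is in your second step. Your monotonicity claim --- that withholding a matching cell from a mask ``never decreases $\borderlen(\mask_h)$'', so any optimal non-exhaustive solution can be made exhaustive at no cost --- is false in general, and you cannot use it even as a heuristic justification: if it were true, the exhaustive optimum would always equal the unrestricted optimum, directly contradicting the paper's explicit remark that the border length can strictly increase under the exhaustive rule. (Locally, removing a transparent cell can \emph{decrease} that mask's border, and more importantly it changes the probe's embedding and hence all subsequent masks, so there is no per-mask comparison to be had.) You do flag this obstacle yourself, but your fallback --- ``inspect the gadgets and pad with sentinel symbols if something goes wrong'' --- is a plan, not an argument: you neither verify that the PWY gadgets behave identically under the exhaustive rule nor specify a padding scheme and prove it preserves the reduction. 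The missing ingredient is precisely the content of Lemma~2 in \cite{PopaWY12} (or an equivalent direct verification that the witnessing optimal deposition sequences in that reduction are exhaustive); without it, the proposition is not established.
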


\begin{proof}
Observe that the reduction used in the proof of Theorem~1 in~\cite{PopaWY12} constructs instances of BMP which only contain $3$ characters. Furthermore, while the instances are formally defined as square arrays, all rows below the $5$-th contain only a dummy character \$ and hence can be omitted without loss of generality. Finally, by Lemma~2 in~\cite{PopaWY12} it follows that optimal exhaustive solutions for these BMP instances are also optimal solutions (in fact, it is these exhaustive solutions that are used to prove Theorem~1 in~\cite{PopaWY12}).
\end{proof}

The hardness result for $\BMP$ relies on a new polynomial-time reduction from $\PBMP$ to $\BMP$.
We believe that this reduction is an interesting result on its own, as it is one of the first results that relates the complexity of these two problems in a general setting. We begin by showcasing a tool for forcibly ``separating'' any optimal deposition sequence.

\begin{lemma}
\label{lem:sep}
Let $\cI=(\probeset,r,m)$ be an instance of $\BMP$ such that each $s\in \probeset$ consists of a prefix $s_{\textit{pre}}\in{\Sigma_{\textit{pre}}^*}$, a fixed separator $\textit{sep}\in{(x^*y^*)^*}$ and a suffix $s_{\textit{suf}}\in{\Sigma_{\textit{suf}}^*}$, where $\Sigma_{\textit{pre}}, \Sigma_{\textit{suf}}, \{x,y\}$ form a partition of $\Sigma$. Let $u\geq 8\cdot \textit{max}_{s\in \probeset}(|s_{\textit{pre}}|)+8\cdot \textit{max}_{s\in \probeset}(|s_{\textit{suf}}|)+1$. If $\textit{sep}=(x^{r\cdot m\cdot u}\cdot y^{r\cdot m\cdot u})^{r\cdot m\cdot u}$ then every optimal good deposition sequence has the form $D_{\textit{pre}}\cdot \textit{sep}\cdot D_{\textit{suf}}$ where $D_{\textit{pre}}\in{\Sigma_{\textit{pre}}^*}$ and $D_{\textit{suf}}\in{\Sigma_{\textit{suf}}^*}$.
\end{lemma}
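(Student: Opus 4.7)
The plan is to compare any good deposition sequence $D$ against the canonical candidate
\[D^{\star} := \bigl(c_1 c_2 \cdots c_{|\Sigma_{\textit{pre}}|}\bigr)^{\max_{s}|s_{\textit{pre}}|} \cdot \textit{sep} \cdot \bigl(c'_1 c'_2 \cdots c'_{|\Sigma_{\textit{suf}}|}\bigr)^{\max_{s}|s_{\textit{suf}}|}\]
of the required form. During the $\textit{sep}$-block of $D^{\star}$ every probe is already past its prefix and embeds $\textit{sep}$ in lockstep, so each $\textit{sep}$-mask is fully transparent and contributes $0$, giving the explicit upper bound $\borderlen(\placement, D^{\star}) \le 2rmc\bigl(\max_{s}|s_{\textit{pre}}|+\max_{s}|s_{\textit{suf}}|\bigr) =: U$.

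Now let $D$ be an arbitrary good deposition sequence and decompose it by alphabet into subsequences $\pi$, $\sigma$, $\tau$ over $\Sigma_{\textit{pre}}$, $\{x,y\}$, $\Sigma_{\textit{suf}}$ respectively; since the three alphabets are disjoint, these uniquely partition $D$. If the $\{x,y\}$-characters of $D$ already form a single contiguous block, a short alphabet-based argument settles the lemma directly: no $\pi$-character can appear after this block (otherwise some prefix would not be finished before the block starts, leaving its separator with no $\{x,y\}$-characters in which to embed), and symmetrically no $\tau$-character can appear before it; inside the block all probes embed $\textit{sep}$ in lockstep, so goodness forces $\sigma = \textit{sep}$ (any extra $\{x,y\}$-character would be redundant), whence $D = \pi \cdot \textit{sep} \cdot \tau$.

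Otherwise some $\pi$- or $\tau$-character is interleaved with the $\{x,y\}$-characters; by goodness it must be used by some probe, and this usage forces that probe's separator-embedding into $\sigma$ to be shifted by some $\beta \ge 1$ relative to the other probes. Exploiting the repetitive structure $\textit{sep} = (x^{rmu} y^{rmu})^{rmu}$, I would show that every such shift produces at least $\Omega(rmu)$ ``mixed'' $\{x,y\}$-masks (each containing both transparent and opaque cells), each worth at least a constant border contribution on the connected $r\times m$ grid. The condition $u \ge 8\max_{s}|s_{\textit{pre}}|+8\max_{s}|s_{\textit{suf}}|+1$ is chosen precisely so that this lower bound strictly exceeds $U$, contradicting the optimality of $D$.

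The main obstacle is this last quantitative step: careful accounting is needed to rule out clever deposit orderings that would let a lagging probe catch up without enough mixed masks being incurred. The repetitive block structure of $\textit{sep}$, with its $2rmu$ uniform-character blocks combined with $u$ dominating the maximum prefix and suffix lengths, is the key combinatorial ingredient forcing every realignment attempt to traverse $\Omega(rmu)$ block boundaries and accumulate the required border cost; once this lower bound is established, the two cases above complete the proof.
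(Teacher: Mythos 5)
Your overall strategy matches the paper's: establish an upper bound of order $r\cdot m\cdot(\max_s|s_{\textit{pre}}|+\max_s|s_{\textit{suf}}|)$ witnessed by a deposition sequence of the required form, and then show that any deviation from that form forces a border length of order $r\cdot m\cdot u$, which exceeds the bound. However, there is a genuine gap: the entire content of the lemma lives in the quantitative step you explicitly defer as ``the main obstacle,'' namely that any interleaving of a prefix or suffix character with the $\{x,y\}$-block forces $\Omega(r\cdot m\cdot u)$ masks of positive border. Describing the intended mechanism (a shift of $\beta\geq 1$ in some probe's separator embedding, propagated through the $2rmu$ uniform blocks) is not a proof of it. The paper closes exactly this step with a concrete offset argument: if, say, a suffix character $q$ precedes some $y$ in $D$, then $\textit{sep}\cdot qy$ is a subsequence of $D$, so two embeddings disagree on the positions of all $(r\cdot m\cdot u)^2$ of their $y$-characters; letting $\textit{offset}_x$ be the number of $x$-masks between their respective first $y$'s, one has $0<\textit{offset}_x\leq r\cdot m\cdot u$, each such $x$-mask has border at least $1$, and in either regime ($\textit{offset}_x=r\cdot m\cdot u$, or $\textit{offset}_x<r\cdot m\cdot u$ so that each of the $r\cdot m\cdot u$ repetitions of $x^{rmu}y^{rmu}$ contributes at least $1$) the total exceeds the upper bound $r\cdot m\cdot u-1$. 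Some version of this counting is what your argument is missing; it is also where one must rule out the ``clever realignment'' you worry about, and the case $\textit{offset}_x=r\cdot m\cdot u$ (a lagging probe catching up by exactly one block) needs separate treatment.

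A second, smaller but consequential problem is your upper bound $U=2rmc\bigl(\max_s|s_{\textit{pre}}|+\max_s|s_{\textit{suf}}|\bigr)$. The factor $c$ is both unnecessary and fatal to the final comparison: the hypothesis $u\geq 8\max_s|s_{\textit{pre}}|+8\max_s|s_{\textit{suf}}|+1$ does not involve $c$, so for large alphabets your claimed lower bound of order $r\cdot m\cdot u$ need not exceed $U$. The correct accounting sums, over the at most $2rm$ neighboring pairs, the Hamming distance of their embeddings; for a deposition sequence of the required form the two embeddings agree on the entire $\textit{sep}$-block and each has at most $\max_s|s_{\textit{pre}}|+\max_s|s_{\textit{suf}}|$ non-blank positions outside it, giving a bound of $4rm\bigl(\max_s|s_{\textit{pre}}|+\max_s|s_{\textit{suf}}|\bigr)\leq rm(u-1)<r\cdot m\cdot u$ with no dependence on $c$. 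Your Case 1 (the contiguous-block case, where goodness and the greedy lockstep embedding force $\sigma=\textit{sep}$) is fine and is a slightly more explicit treatment than the paper gives, but it is not where the difficulty lies.
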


\begin{proof}
Notice that $r\cdot m\cdot u-1$ forms a trivial upper-bound on the border length of $\cI$, as witnessed by any deposition sequence of the form $D_{\textit{pre}}\cdot \textit{sep}\cdot D_{\textit{suf}}$ (regardless of placement). Indeed, there are at most $4r\cdot m$ pairs of neighboring cells in the array, and for each such pair the border length is bounded by the hamming distance between the embeddings placed on these cells, where any deposition sequence of this form yields a bound of $2\cdot \textit{max}_{s\in \probeset}(|s_{\textit{pre}}|)+2\cdot \textit{max}_{s\in \probeset}(|s_{\textit{suf}}|)$.

Consider any optimal good deposition sequence $D$ and let $p\in \Sigma_{\textit{pre}}, q\in \Sigma_{\textit{suf}}$. Consider for a contradiction that $qp$ is a subsequence of $D$. Then $\textit{pre}\cdot qp$ would also be a subsequence of $D$; however, each mask for a character in $\textit{pre}$ would yield an increase of the border length by at least $1$, since the array contains a cell in the array where this mask cannot be applied (specifically, this is the cell containing the sequence beginning with $p$). This would already break the upper-bound provided above. hence $qp$ cannot be subsequence of $D$.

Next, consider for a contradiction that $qy$ is a subsequence of $D$. Then $\textit{sep}\cdot qy=(x^{r\cdot m\cdot u}y^{r\cdot m\cdot u})^{r\cdot m\cdot u}\cdot qy$ would also be a subsequence of $D$. This means that there exist two embeddings $\embed_1, \embed_2$ which differ in the positions of their first, second, third,\dots,$({r\cdot m\cdot u})^2$-th $y$ characters. Let $\textit{offset}_x$ be the number of masks for $x$ which occur between the position of the first $y$ character in $\embed_1$ and the first $y$ character in $\embed_2$; notice that $0<\textit{offset}_x\leq r\cdot m\cdot u$. Each mask for $x$ in the offset has a border length of at least $1$, since there is a sequence $s_2$ in the array which begins with $y$.
If $\textit{offset}_x<r\cdot m\cdot u$ then the upper-bounded on the border length of $D$ is broken by the fact that that $x^{r\cdot m\cdot u}y^{r\cdot m\cdot u}$ occurs $(r\cdot m\cdot u)$-many times in succession in the deposition sequence, and each occurrence would necessarily increase the border length by at least $1$. On the other hand, if $\textit{offset}_x=r\cdot m\cdot u$ then the upper-bound on the border length would be broken already by all the masks for $x$ which occur in the offset.

By a symmetric argument, we obtain that $xp$ also cannot occur as a subsequence of $D$. Hence the deposition sequence must have the form $D_{\textit{pre}}\cdot \textit{sep}\cdot D_{\textit{suf}}$.
\end{proof}

Observe that ``flipping'' the array horizontally or vertically preserves the optimal border length but formally changes the placement $\placement$. The purpose of the following key lemma is to provide a tool to fix the optimal positions of probes in the array; to this end, we will be considering placements which are unique up to these simple symmetries.

\begin{lemma}\label{lem:ab-fix-optimal}
  Let $a,b,x,y \in \Sigma$ and $r,m,t\in\mathbb{N}$.
  Consider an $r\times m$ array, and probes $\probeset =\{a^{i\cdot t}\cdot sep \cdot b^{j\cdot t}\mid i\in [r]\text{ and }j\in [m]\}$. Then:
  \begin{enumerate}
  \item the unique optimal placement $\placement_0$ (up to simple symmetries) places each probe $a^{i\cdot t}\cdot \textit{sep}\cdot b^{j\cdot t}$ in cell $(i,j)$, 
  \item the unique optimal good deposition sequence is $D_0=a^{r\cdot t}\cdot \textit{sep}\cdot b^{m\cdot t}$, and 
  \item for any placement $\placement\neq \placement_0$ (except for symmetries of $\placement_0$) and any deposition sequence $D$, it holds that  $\borderlen(\placement,D)\geq \borderlen(\placement_0,D_0)+t$.
  \end{enumerate}
\end{lemma}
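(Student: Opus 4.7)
The plan is to first pin down the optimal good deposition sequence using Lemma~\ref{lem:sep}, and then reduce the problem to a combinatorial minimization over placements alone. I assume $\textit{sep}$ is chosen to meet the conditions of Lemma~\ref{lem:sep} with $\Sigma_{\textit{pre}} = \{a\}$ and $\Sigma_{\textit{suf}} = \{b\}$. Lemma~\ref{lem:sep} then guarantees that any optimal good deposition sequence has the form $D_{\textit{pre}} \cdot \textit{sep} \cdot D_{\textit{suf}}$ with $D_{\textit{pre}} \in \{a\}^*$ and $D_{\textit{suf}} \in \{b\}^*$. Since the longest $a$-prefix among the probes is $a^{r \cdot t}$, goodness forces $|D_{\textit{pre}}| = r \cdot t$ (anything longer would be redundant, anything shorter could not match the prefix $a^{r\cdot t}$); symmetrically $|D_{\textit{suf}}| = m \cdot t$. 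Hence $D_0 = a^{r \cdot t} \cdot \textit{sep} \cdot b^{m \cdot t}$ is the unique optimal good deposition sequence for every placement, proving item~2. By Observation~\ref{obs:good} it then suffices to consider $D = D_0$ when bounding border lengths.

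Next I would compute the border contribution of any adjacency under $D_0$. For neighbouring cells $u,v$ holding probes $a^{i_u \cdot t} \cdot \textit{sep} \cdot b^{j_u \cdot t}$ and $a^{i_v \cdot t} \cdot \textit{sep} \cdot b^{j_v \cdot t}$, the exhaustive rule gives deterministic embeddings that agree throughout the $\textit{sep}$-portion and differ in exactly $|i_u - i_v| \cdot t$ positions of the $a$-prefix region (where one probe has ``$a$'' and the other has ``$-$'') and $|j_u - j_v| \cdot t$ positions of the $b$-suffix region. Defining $F(\placement) := \sum_{(u,v)\text{ adj}} (|i_u - i_v| + |j_u - j_v|)$, this yields $\borderlen(\placement, D_0) = t \cdot F(\placement)$. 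In the canonical placement $\placement_0$, every horizontal adjacency contributes $0 + 1$ and every vertical adjacency contributes $1 + 0$, so $F(\placement_0) = r(m-1) + (r-1)m = 2rm - r - m$.

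The heart of the argument is the lower bound $F(\placement) \geq 2rm - r - m$ together with a tight characterization of equality. Since $\placement$ is a bijection, neighbouring cells always carry distinct labels $(i_u, j_u) \neq (i_v, j_v)$, so each of the $2rm - r - m$ adjacencies of the $r \times m$ grid contributes at least $1$ to $F(\placement)$. Equality throughout requires every adjacency to contribute exactly $1$, meaning the labels of any two adjacent cells are themselves neighbours in the $[r] \times [m]$ label grid. Thus $v \mapsto (i(v), j(v))$ is a bijective graph homomorphism --- and hence an automorphism --- of the $r \times m$ grid graph, which means $\placement$ coincides with $\placement_0$ up to the simple (reflection/rotation) symmetries of the grid, establishing item~1. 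For any $\placement$ outside this symmetry class, $F(\placement) \geq 2rm - r - m + 1$, and therefore $\borderlen(\placement, D_0) \geq \borderlen(\placement_0, D_0) + t$; since $D_0$ is the unique optimal good deposition sequence for every $\placement$, this bound carries over to arbitrary $D$ via Observation~\ref{obs:good}, proving item~3.

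The main obstacle I anticipate is carefully justifying that equality in the edge-wise L1 bound happens exactly for the simple symmetries of $\placement_0$, i.e., identifying the equality case with the automorphism group of the grid graph (which differs slightly between the $r = m$ and $r \neq m$ cases). Once this step is in place the rest of the argument is essentially bookkeeping, since Lemma~\ref{lem:sep} removes all difficulty regarding the choice of $D$ and collapses the problem to minimizing $F(\placement)$ over bijective placements.
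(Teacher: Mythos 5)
Your argument is correct, and its first half --- forcing $D=D_0$ via Lemma~\ref{lem:sep} and computing $\border_{D_0}$ of two neighbouring probes as $t\cdot(|i_u-i_v|+|j_u-j_v|)$ --- coincides with the paper's: the paper's ``similar'' pairs are exactly your pairs at label distance $1$, and both proofs obtain the value $(2rm-r-m)\cdot t$ for $\placement_0$ and the $+t$ gap from the integrality of your $F(\placement)$. Where you genuinely diverge is the uniqueness of the optimal placement. The paper proves it by hand: probes with exactly two (resp.\ three) similar partners must occupy the corners (resp.\ the perimeter), their boundary placement is unique up to symmetry, and the claim follows by peeling off the perimeter and inducting on the inner $(r-2)\times(m-2)$ array. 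You instead observe that equality in the edge-wise bound makes the label map a bijective homomorphism of the finite grid graph onto the label grid, hence (the edge counts being equal) a graph isomorphism, i.e.\ an automorphism of the $r\times m$ grid, and then appeal to the known structure of that automorphism group. This is cleaner and more modular, and it correctly isolates the one fact still owed: that the automorphism group consists of the two flips for $r\neq m$, plus the transpose when $r=m$. Proving that from scratch (via corner degrees and distance preservation) costs roughly what the paper's perimeter induction costs, so neither write-up is fully self-contained on this point; a side benefit of your route is that it makes explicit the additional transpose symmetry in the square case, which the paper's phrase ``simple symmetries'' leaves vague.
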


\begin{proof}

We proceed in two steps. First, we compute the border length of $(\placement_0,D_0)$. Then, we establish that $\placement_0$ is the only optimal placement up to the above-mentioned simple symmetries, and that other placements yield a border length which is lower-bounded by $t+\borderlen(\placement_0,D_0)$. Notice that $D_0$ is the only optimal good deposition sequence regardless of placement by Lemma \ref{lem:sep}.

\begin{claim}
$\borderlen(\placement_0, D_0) = ((r-1)\cdot m+r\cdot(m-1))\cdot t$.
\end{claim}
\begin{cproof}
For character $a$, we start with $t$-many masks that contain character $a$ in each cell. Notice that these masks have border length zero. Then we continue with $t$-many masks that have character ``$-$''  in the first row and character $a$ everywhere else. Each of these masks has border length $m$. Next we use $t$-many masks, where the first two rows contain character ``$-$'', and so on. In total, we obtain a border length of $(r-1)\cdot m\cdot t$ for character $a$. For character $x$ and $y$, all masks contain character $x$ or $y$ in each cell and hence all have a border length of zero. Finally, for character $b$ the procedure is analogous -- we simply swap columns and rows. This gives a border length of $r\cdot (m-1)\cdot t$ for character $b$.
\end{cproof}

Now consider any optimal solution $(\placement,D)$. The fact that $D=D_0$ follows from Lemma \ref{lem:sep}. We now proceed to the core of our proof. Notice that for each pair of probes $s_1,s_2\in \probeset$ it holds that $\border_{D_0}(s_1,s_j)\geq t$. We say that $s_1,s_2$ are \emph{similar} if $\border_{D_0}(s_1,s_j)=t$. Since the number of pairs of cells which are neighbors in an $r\times m$ array is exactly $(r-1)\cdot m+r\cdot (m-1)$ and $\borderlen(\placement_0, D_0) = ((r-1)\cdot m+r\cdot(m-1))\cdot t$, any optimal placement $\placement$ may only place probes which are similar into neighboring cells. Furthermore, if a placement $\placement$ is not optimal, then $\borderlen(\placement,D)\geq t+\borderlen(\placement_0,D_0)$ since for any $s_1, s_2$ which are not similar it holds that $\border_{D_0}(s_1,s_j)\geq 2t$.

Let us denote the cells which have at most $3$ neighbors in the array the \emph{perimeter} and the cells which have at most $2$ neighbors the \emph{corners}.
For the final part of the proof, we use the inductive assumption that $\placement_0$ is the unique optimal placement for all $r'\times m'$ arrays such that $r'<r$ and $m'<m$ as long as the placement of at least two corners is fixed. Furthermore, we assume that $\textit{min}(r,m)>1$; the lemma trivially holds for $\textit{min}(r,m)=0$, and is easily seen that $\textit{min}(r,m)=1$ the optimal placement must be an ascending sequence, which is unique if its corners/endpoints are fixed. 

For each $s\in \probeset$, let $\textit{sim}(s)$ denote the set of probes which are similar to $s$. Notice that there are precisely four probes such that $|\textit{sim}(s)|=2$ and precisely $2r+2m-4$ probes such that $|\textit{sim}(s)|=3$, and there is a unique (up to symmetry) placement of these probes in the corners and perimeter so that similar probes are placed on neighboring cells (see Fig. \ref{fig:perimeter}). Let $\probeset_0$ contain all the probes placed into the perimeter.

\begin{figure}
\centering
\definecolor{Gray}{gray}{0.85}
\newcolumntype{g}{>{\columncolor{Gray}}c}
\begin{tabular}{ | g | c | c | c | c | c | g | }
\hline
  \rowcolor{Gray}$s_{1,1}$ & $s_{1,2}$ & $s_{1,3}$ & $\cdots$ & $s_{1,m-2}$ & $s_{1,m-1}$ & $s_{1,m}$ \\ \hline
  $s_{2,1}$ & $s_{2,2}$ & $s_{2,3}$ & $\cdots$ & $s_{2,m-2}$ & $s_{2,m-1}$ & $s_{2,m}$ \\ \hline
  $\vdots$ & $\vdots$ & $\vdots$ & $\cdots$ & $\vdots$ & $\vdots$ & $\vdots$ \\ \hline
  $s_{r-1,1}$ & $s_{r-1,2}$ & $s_{r-1,3}$ & $\cdots$ & $s_{r-1,m-2}$ & $s_{r-1,m-1}$ & $s_{r-1,m}$ \\ \hline
  \rowcolor{Gray}$s_{r,1}$ & $s_{r,2}$ & $s_{r,3}$ & $\cdots$ & $s_{r,m-2}$ & $s_{r,m-1}$ & $s_{r,m}$ \\ \hline
\end{tabular}
\caption{An $r\times m$ array. The corners and the perimeter are highlighted in gray.}
\label{fig:perimeter}
\end{figure}

Notice that the placement of these probes on the perimeter precisely matches $\placement_0$, and the placement of probes such that $|\textit{sim}(s)|\leq 2$ in $\probeset'=\probeset-\probeset_0$ is fixed by the placement of $\probeset_0$ in the perimeter. 

If $\textit{min}(r,m)=2$ then this concludes the proof. If $\textit{min}(r,m)=3$ then the remaining placement reduces to the placement of $\probeset'=\probeset-\probeset_0$ into a one-dimensional array, which is unique when the corners are fixed. Finally, if $\textit{min}(r,m)=4$ then the remaining placement reduces to the placement of $\probeset'$ into an $(r-2)\times (m-2)$ array, which is again unique by our inductive hypothesis.
\end{proof}

With Proposition \ref{prop:pbmp-paraNP} and Lemma \ref{lem:ab-fix-optimal}, we can proceed to:

\begin{theorem}\label{thm:bmp-paraNP}
  $\BMP_{c,r}$ is \paraNP-hard.
\end{theorem}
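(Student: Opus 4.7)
The plan is to construct a polynomial-time reduction from $\PBMP_{c,r}$ to $\BMP_{c,r}$; combined with Proposition~\ref{prop:pbmp-paraNP} this yields \paraNP-hardness. The conceptual difficulty is that a $\PBMP$ instance comes with a fixed placement $\placement$ while a $\BMP$ instance does not, so the reduction must \emph{encode} the given placement into the probes themselves in a way that every optimal $\BMP$ solution is forced to rediscover it; Lemmas~\ref{lem:sep} and~\ref{lem:ab-fix-optimal} are precisely the tools designed for such an encoding.

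Given a $\PBMP_{c,r}$ instance $(\probeset,r,m,\placement)$ over alphabet $\Sigma$ with budget $o$, I would introduce a constant number of fresh characters---$a,b$ for ``coordinate tags'' and two disjoint pairs $\{x_1,y_1\},\{x_2,y_2\}$ to instantiate two copies $\textit{sep}_1,\textit{sep}_2$ of the separator pattern from Lemma~\ref{lem:sep}---and choose $t$ polynomially large in $r\cdot m\cdot \ell$ so that it strictly exceeds any saving achievable by rearranging the original-probe parts (say $t>4\cdot r\cdot m\cdot \ell+1$), with the separator parameters picked large enough to satisfy Lemma~\ref{lem:sep} on the prefixes and suffixes of the constructed probes. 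For each $s\in\probeset$ with $\placement(s)=(i,j)$, the augmented probe is $s' = a^{i\cdot t}\cdot \textit{sep}_1\cdot s\cdot \textit{sep}_2\cdot b^{j\cdot t}$, and the target $\BMP_{c,r}$ instance is $(\probeset',r,m)$ with $\probeset'=\{s'\mid s\in \probeset\}$ and budget $o'=o+c_{\textit{coord}}$, where $c_{\textit{coord}}=((r-1)\cdot m+r\cdot(m-1))\cdot t$ is the contribution of the coordinate portion (the separator masks themselves contribute $0$ since each of them covers every cell of the array).

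The correctness analysis proceeds in two steps. First, applying Lemma~\ref{lem:sep} once for each of $\textit{sep}_1,\textit{sep}_2$ (with the corresponding three-way partitions of the augmented alphabet around each separator) pins down every optimal good deposition sequence $D$ to the form $D_a\cdot \textit{sep}_1\cdot D_s\cdot \textit{sep}_2\cdot D_b$ with $D_a\in a^*$, $D_s\in\Sigma^*$, $D_b\in b^*$; moreover, the counting in the proof of Lemma~\ref{lem:ab-fix-optimal} forces $D_a=a^{r\cdot t}$ and $D_b=b^{m\cdot t}$ at optimum. Second, the placement-forcing part of Lemma~\ref{lem:ab-fix-optimal} carries over to the augmented probes: its ``similarity'' accounting is driven entirely by the coordinate pieces $a^{i\cdot t}$ and $b^{j\cdot t}$, while the shared interior block $\textit{sep}_1\cdot s\cdot \textit{sep}_2$ contributes at most $O(\ell)$ extra border per neighboring pair, which by the choice of $t$ is dwarfed by the $t$-unit penalty that any deviation from $\placement$ incurs on the coordinate portion. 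Hence every optimal $\BMP$ placement must coincide with $\placement$ up to the trivial symmetries of the array.

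Once the placement is pinned to $\placement$ and $D$ has the structure above, the total $\BMP$ border length decomposes as $c_{\textit{coord}}+\borderlen(\placement,D_s)$, where the second summand is precisely the $\PBMP$ objective on $(\probeset,\placement)$ under deposition sequence $D_s$. Therefore the $\PBMP$ instance admits a solution of cost $\leq o$ iff the constructed $\BMP$ instance admits one of cost $\leq o'$. The alphabet grows only by a constant and $r$ is preserved, so this is an fpt-reduction for the parameter $(c,r)$. The main technical obstacle I expect is the second analysis step: one has to rerun the perimeter and inductive argument from the proof of Lemma~\ref{lem:ab-fix-optimal} for the augmented probes, carefully bookkeeping the extra $s$-dependent cross-terms between neighboring cells and verifying that the chosen $t$ strictly dominates them, so that no non-canonical placement can be optimal.
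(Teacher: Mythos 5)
Your reduction is essentially the paper's own proof: tag each probe of the $\PBMP_{c,r}$ instance with unary coordinate blocks $a^{i\cdot t}$ and $b^{j\cdot t}$ scaled by a large $t$, use two fresh-alphabet separators so that Lemma~\ref{lem:sep} forces the deposition sequence to decompose, and invoke Lemma~\ref{lem:ab-fix-optimal} to pin the optimal placement to the given one, leaving the residual instance equal to the original $\PBMP$ instance. The only difference is cosmetic --- the paper appends the original probe after both coordinate blocks ($a^{i\cdot t}\cdot\textit{sep}_1\cdot b^{j\cdot t}\cdot\textit{sep}_2\cdot s$) rather than sandwiching it between the separators, which lets it cite Lemma~\ref{lem:ab-fix-optimal} slightly more directly, but your bookkeeping of the $O(\ell)$ interior cross-terms against the $t$-penalty achieves the same effect.
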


\begin{proof}
We provide a reduction from $\PBMP_{c,r}$, which is \paraNP-hard by Proposition \ref{prop:pbmp-paraNP}. Let $\Sigma'$ be the language of $\PBMP_{c,r}$, $x_1,y_1,x_2,y_2\not\in \Sigma'$ and $\Sigma=\Sigma'+\{x_1,y_1,x_2,y_2\}$. From any instance $\cI'=(\probeset',\placement',r,m)$ of $\PBMP_{c,r}$, we construct an instance $\cI=(\probeset,r,m)$ of $\BMP_{c,r}$ as follows. For each $s\in \probeset'$ such that $\placement'(s)=(i,j)$ we put $a^{i\cdot t}\cdot sep_1\cdot b^{j\cdot t}\cdot sep_2\cdot s$ into $\probeset$, where:
\begin{itemize}
\item $t>(\textit{max}_{s\in \probeset'}|s| \cdot r\cdot m)^2$.
\item $\textit{sep}_1=(x_1^{r\cdot m\cdot u_1}y_1^{r\cdot m\cdot u_1})^{r\cdot m\cdot u_1}$
\item $\textit{sep}_2=(x_2^{r\cdot m\cdot u_2}y_2^{r\cdot m\cdot u_2})^{r\cdot m\cdot u_2}$
\item the constants $u_1, u_2$ for $\textit{sep}_1$ and $\textit{sep}_2$ respectively are sufficiently large so as to satisfy the condition of Lemma \ref{lem:sep}; for instance, $u_2>100t^3$ and $u_1>1000t^4$.
\end{itemize}

By Lemma \ref{lem:sep} we have that any optimal good deposition sequence for $\cI$ must have the form $a^{r\cdot u}\cdot \textit{sep}_1\cdot b^{m\cdot u}\cdot \textit{sep}_2\cdot D'$. Let us now compare an arbitrary solution $(\placement,D)$ to $(\placement',D)$. By Lemma \ref{lem:ab-fix-optimal}, either $\placement$ is equivalent to $\placement'$ by symmetry, or the border length of masks for $a,x_1,y_1,b$ in $(\placement,D)$ will be at least $t$ greater than the border length of these masks in $(\placement',D)$. However, $t$ was chosen to be sufficiently large to exceed the worst-case border length of all masks for $\Sigma'$. So we conclude that any optimal solution for $\cI$ must use a placement which is either the same as or symmetric to $\placement'$.

Finally, observe that after the last mask of $\textit{sep}_2$ is applied, the remainder of $\cI$ is equivalent to $\cI'$, and hence $D'$ is also a solution to $\cI'$.
\end{proof}

Theorem \ref{thm:bmp-paraNP} and Proposition \ref{prop:pbmp-paraNP} show that one cannot hope to find an fpt-algorithm for $\BMP$ or $\PBMP$ parameterized by any subset of $\{c,r\}$.
These results complete the hardness part of our complexity map for $\BMP$ or $\PBMP$. %
For $\BMP_{c,\ell}$ it remains open whether the problem is fixed parameter tractable. Still, we can relate this problem to $k$-\textsc{Balanced Partition}, a problem studied well in the literature~\cite{AndreevR06,FeldmannThesis12,Feldmann13}.

In a $k$-\textsc{Balanced Partition} instance we are given a graph $G=(V,E)$ with $|V|=n$. The question is to find a partition of the vertices $V$ into $k$ sets $V_1,\ldots, V_k$ such that $|V_i|\leq\ceil{\frac{n}{k}}$ for all $1\leq i\leq k$, and the cut size (i.e., the number of edges $\{x,y\}$ such that $x\in V_i$, $y\in V_j$, and $i\neq j$) is minimized. We remark that, to the best of our knowledge, the parameterized complexity of $k$-\textsc{Balanced Partition} parameterized by $k$ is open on solid rectangular grids \cite{FeldmannThesis12}. Below we show that $k$-\textsc{Balanced Partition} on solid rectangular grids can be reduced to $\BMP$ and hence $\BMP$ is at least as hard as $k$-\textsc{Balanced Partition}.

\begin{proposition}
\label{prop:BMPclred}
  There is a polynomial time reduction from $k$-\textsc{Balanced Partition} on solid rectangular grids to $\BMP$.
\end{proposition}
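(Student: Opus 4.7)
The plan is to construct a natural encoding of the partition problem into $\BMP$ in which each grid vertex corresponds to one probe cell, each probe is a single character, and the character labels placed on the cells encode which part of the partition each vertex belongs to. Concretely, given a $k$-\textsc{Balanced Partition} instance on an $r\times m$ solid rectangular grid $G$ with $n=rm$ vertices, I would build a $\BMP$ instance on an $r\times m$ array over the alphabet $\Sigma=\{1,\dots,k\}$ and include, for each $i\in[k]$, exactly $n_i$ copies of the length-$1$ probe ``$i$'', where the multiset $(n_1,\dots,n_k)$ is chosen so that $\sum_i n_i = n$ and each $n_i\in\{\lceil n/k\rceil,\lfloor n/k\rfloor\}$ (i.e.\ the maximally balanced sizes).

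The next step is to describe the structure of any optimal (good) deposition sequence. Since each probe consists of a single character, every character $i\in\Sigma$ must appear in the deposition sequence at least once in order for the $n_i$ probes of type $i$ to be synthesized. Conversely, after the first occurrence of character $i$, the exhaustive rule guarantees that all type-$i$ probes are completed simultaneously, so any further mask for $i$ is entirely opaque and contributes zero border length; by Observation~\ref{obs:good} it can be removed. Hence I may assume the deposition sequence $D$ is simply some permutation of $\Sigma$ of length $k$, and in particular $D$ itself is irrelevant to the border cost.

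The core calculation is then to relate $\borderlen(\placement,D)$ to cut edges. For any two adjacent cells of the array labelled by characters $x$ and $y$, the mask for $x$ contributes $1$ iff exactly one of the two cells carries $x$, and analogously for $y$; summing over all $k$ masks yields a contribution of $0$ if $x=y$ and exactly $2$ if $x\ne y$. Consequently $\borderlen(\placement,D)=2\cdot|\{\text{adjacent pairs of cells labelled differently}\}|$, which is twice the cut size of the partition of the grid induced by $\placement$. Feasible placements of $\probeset$ are in one-to-one correspondence (via the labelling) with partitions of $V(G)$ into parts of the prescribed sizes, so the optimal $\BMP$ cost equals $2$ times the minimum $k$-\textsc{Balanced Partition} cut achievable with those part sizes. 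The reduction runs in polynomial time, and a budget $B$ for $k$-\textsc{Balanced Partition} translates into budget $2B$ for $\BMP$.

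The main obstacle is the mismatch that $k$-\textsc{Balanced Partition} allows any size vector with $|V_i|\le\lceil n/k\rceil$ while my construction fixes a single maximally balanced size vector. When $k\mid n$ the balance constraint together with $\sum |V_i|=n$ forces $|V_i|=n/k$, so there is nothing to do. For the general case I would handle it by a standard padding argument---extending the grid to dimensions whose size is divisible by $k$ and adding dummy probes that behave neutrally with respect to the border length---or, alternatively, by making the reduction output one $\BMP$ instance per maximally balanced size multiset and taking the minimum; both adjustments preserve polynomial running time and the correctness of the correspondence.
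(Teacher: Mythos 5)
Your construction coincides with the paper's: one length-$1$ probe per grid vertex, with character $c_i$ occurring $\lceil n/k\rceil$ or $\lfloor n/k\rfloor$ times so that the multiplicities sum to $n$. Your verification that every good deposition sequence is just a permutation of the used characters (any repeated character gives a redundant, all-opaque mask) and that $\borderlen(\placement,D)$ equals exactly twice the cut size of the induced partition is correct and considerably more explicit than the paper's ``it is easy to verify''.

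The ``mismatch'' you flag at the end is a genuine subtlety, and the paper's one-line proof does not address it either: the reduction only realizes partitions whose part sizes form the maximally balanced multiset, whereas the problem definition only requires $|V_i|\le\ceil{n/k}$. This can matter. On the $2\times 5$ grid with $k=3$, the sizes $(4,4,2)$ admit a partition of cut size $4$ (two $2\times 2$ blocks and one vertical domino), while any partition with sizes $(4,3,3)$ has at most $4+2+2=8$ internal edges out of $13$ and hence cut size at least $5$; so the optimum of the constructed $\BMP$ instance overshoots the true $k$-\textsc{Balanced Partition} optimum. Be aware, however, that your proposed repairs do not obviously close this gap: padding the grid changes the $k$-\textsc{Balanced Partition} instance itself (it alters $n$ and hence the bound $\ceil{n/k}$), and there is only \emph{one} maximally balanced size multiset --- covering all admissible size vectors would require enumerating all partitions of $n$ into at most $k$ parts of size at most $\ceil{n/k}$, which is not polynomially bounded when $k$ grows with the input. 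So either one restricts attention to the exactly-balanced variant of the partition problem (where your argument, like the paper's, is complete), or a genuinely different fix is needed; as it stands your proof is no less rigorous than the one in the paper, but neither establishes the proposition for the stated definition of $k$-\textsc{Balanced Partition}.
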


\begin{proof}
  Let $G=(V,E)$ be a solid rectangular grid of size $r\times m$ with $|V|=n$.
  Further, let $l,x\in\mathbb{N}_0$ such that $n=l\cdot k+x$ and $\Sigma=\{c_1,\ldots,c_k\}$.
  We construct a probe set $\probeset = \{\text{$f(i)$ copies of sequence $c_i$} \mid i\in[l] \}$, where function $f(i)=l+1$ if $1\leq i\leq x$ and $f(i)=l$ otherwise.
  It is easy to verify that the placement $\placement$ of a $\BMP$ solution gives also a solution to $k$-\textsc{Balanced Partition} if the characters in $\Sigma$ are seen as partition sets $V_1,\ldots V_k$.
\end{proof}

\section{Fpt-Algorithms}
\label{sec:fpt}

In the following sections we discuss fpt-algorithms for several parameters. The first group focuses on sequences of moderate length and an array whose size is primarily growing in one dimension, i.e., on the parameters $c$, $\ell$, and $r$. In contrast, the second group parameterizes by $c$ and the maximum admissible border length $o$.

\subsection{Fpt-Algorithm for $\PBMP_{c,\ell}$}
\label{sub:fptcl}

Our first algorithm provides a basic introduction to the techniques used later on.

\begin{observation}
\label{obs:bounds}
For any instance $(\probeset, r, m)$ of $\BMP_{c,\ell}$, there are at most $c^{\ell}$ unique sequences in $\probeset$.
\end{observation}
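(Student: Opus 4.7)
The statement is a straightforward counting observation, so my plan is brief and the argument has no substantive obstacle. First I would recall from Section~\ref{sec:prelim} that every probe $s \in \probeset$ is a string over the alphabet $\Sigma$ with $|s| \leq \ell$, and that $c = |\Sigma|$. Hence every unique sequence in $\probeset$ lies in the set $\bigcup_{k=0}^{\ell} \Sigma^{k}$ of all strings over $\Sigma$ of length at most $\ell$.

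Second, I would bound the cardinality of this set. Since $|\Sigma^{k}| = c^{k}$, the number of distinct strings of length at most $\ell$ is $\sum_{k=0}^{\ell} c^{k}$. To reach the cleaner expression $c^\ell$ claimed in the observation, I would either (i) pad each shorter probe canonically to length exactly $\ell$, yielding an injection from the set of unique probes into $\Sigma^\ell$, whose cardinality is exactly $c^\ell$; or (ii) observe that in the nondegenerate case $c \geq 2$ the geometric sum is dominated by its last term $c^\ell$ up to a constant factor, which is absorbed into the later fpt-runtime analyses that invoke this observation.

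The only step requiring even a line of care is the passage from the exact geometric count $\sum_{k=0}^{\ell} c^{k}$ to the tighter expression $c^\ell$, and the injective-padding remark above settles this. Everything else is immediate from the definitions of $c$, $\ell$, and $\probeset$ given in the preliminaries, so the observation follows by a one-line pigeonhole argument.
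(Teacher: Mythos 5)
The paper offers no proof of this observation at all---it is stated as self-evident---and your core counting argument (each probe is a string over $\Sigma$ of length at most $\ell$, hence at most $\sum_{k=0}^{\ell}c^{k}$ distinct probes) is exactly the justification the authors leave implicit. You are also right that passing from this sum to the literal bound $c^{\ell}$ needs a remark. The problem is that your route (i) is wrong: there is no injection from the set of nonempty strings of length at most $\ell$ into $\Sigma^{\ell}$, since for $c\geq 2$ and $\ell\geq 2$ the former set has $c(c^{\ell}-1)/(c-1)>c^{\ell}$ elements; concretely, if both $a$ and $aa$ occur as probes, padding $a$ to length $\ell=2$ with the letter $a$ collides with the probe $aa$. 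So the ``injective-padding remark'' does not settle anything and should be dropped. Your route (ii) is the correct resolution: for $c\geq 2$ one has $\sum_{k=0}^{\ell}c^{k}\leq 2c^{\ell}$ (and $\leq c^{\ell+1}$ in general), and this constant-factor slack is absorbed without consequence everywhere the observation is used (Lemma~\ref{lem:bounds}, Proposition~\ref{prop:PBMPcl}, Observation~\ref{obs:columntypes}), since every downstream bound has the form $c^{c^{\bigO{\ell}}}$. In short: correct counting, correct diagnosis of the slight imprecision in the stated bound, but the first of your two proposed fixes fails and only the second one stands.
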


\begin{lemma}
\label{lem:bounds}
For any instance $(\probeset, r, m)$ of $\BMP_{c,\ell}$ or any instance $(\probeset, \placement, r, m)$ of $\PBMP_{c,\ell}$ it holds that $|D|\leq c^{\ell}\cdot \ell$ for any good deposition sequence $D$.
\end{lemma}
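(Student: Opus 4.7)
The plan is to combine the bound on the number of distinct probes with the fact that each embedding can ``consume'' only few characters of $D$, and then use the goodness condition to argue that every character of $D$ must be consumed by something.

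First, I would invoke Observation~\ref{obs:bounds} to note that the multiset $\probeset$ contains at most $c^{\ell}$ distinct sequences. Call this set $Q$. For each $q \in Q$, let $\embed_q$ denote its embedding into $D$; by the definition of an embedding, $\embed_q$ has exactly $|q| \leq \ell$ positions in $[|D|]$ carrying a non-``$-$'' character. Let $U_q \subseteq [|D|]$ denote this set of positions, so $|U_q| \leq \ell$.

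Next I would consider the union $U = \bigcup_{q \in Q} U_q$ of all positions of $D$ that are non-``$-$'' in at least one embedding. A trivial union bound gives
\[
|U| \;\leq\; \sum_{q \in Q} |U_q| \;\leq\; |Q| \cdot \ell \;\leq\; c^{\ell} \cdot \ell.
\]
Note that every probe $s \in \probeset$ has the same embedding as its representative in $Q$ (since the embedding is uniquely determined by $D$ together with $s$, not by its location), so $U$ really is the set of all indices of $D$ used by any embedding.

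Finally, I would appeal to the definition of a good deposition sequence: $D$ is good iff no index $i \in [|D|]$ satisfies $\embed_j[i]=``-$'' for all $j$. Equivalently, $U = [|D|]$. Combining with the previous bound yields $|D| = |U| \leq c^{\ell} \cdot \ell$, which is exactly the claimed inequality; the same argument applies unchanged to $\PBMP_{c,\ell}$ since a placement plays no role in this counting. I do not foresee a substantive obstacle here — the statement is essentially a bookkeeping consequence of Observation~\ref{obs:bounds} and the definition of goodness, and the only small subtlety is making explicit that identical probes share a single embedding, so that the bound on $|Q|$ (rather than on $|\probeset|$) is what enters the union bound.
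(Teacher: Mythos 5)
Your proposal is correct and follows essentially the same argument as the paper: at most $c^{\ell}$ distinct embeddings, each with at most $\ell$ non-``$-$'' positions, and goodness forces every position of $D$ to be covered. The paper phrases this as a pigeonhole argument by contradiction rather than a direct union bound, but the content is identical; your explicit remark that identical probes share one embedding is a harmless (and slightly more careful) addition.
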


\begin{proof}
Assume towards contradiction that there is a good deposition sequence $D$ which contains $|D|>c^{\ell}\cdot \ell$ characters. Since the total number of distinct sequences $s_i\in \probeset$ is bounded by $c^{\ell}$, the total number of distinct embeddings $\embed_i$ is also bounded by $c^\ell$. Each embedding $\embed_i$ contains at most $\ell$ characters in $\Sigma \backslash \{-\}$. Hence by the pigeon-hole principle there must exist some $j\in [|D|]$ such that $\embed_i[j]=``-$'' for all $i\in [|\probeset|]$, which implies that $D$ is not good (contradiction).
\end{proof}

At this point we can already prove:

\begin{proposition}
\label{prop:PBMPcl}
$\PBMP_{c,\ell}$ is fixed parameter tractable, and there exists an algorithm for $\PBMP_{c,\ell}$ which runs in time $c^{c^{\bigO{\ell}}}|\probeset|$.
\end{proposition}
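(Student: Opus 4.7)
The plan is to give a brute-force enumeration algorithm that is made feasible by the bounds already established. By Observation~\ref{obs:good}, it suffices to search only among good deposition sequences, and by Lemma~\ref{lem:bounds} every such sequence has length at most $c^\ell \cdot \ell$. Hence the search space is the set of all sequences over $\Sigma$ of length at most $c^\ell\cdot \ell$, which has size at most $c \cdot c^{c^\ell \cdot \ell}=c^{c^{O(\ell)}}$.

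First I would preprocess $\probeset$ in time $O(|\probeset|)$ to extract the set $Q$ of distinct probes together with their multiplicities and the map $\eta$ used in Proposition~\ref{prop:linear}; by Observation~\ref{obs:bounds} we have $|Q|\le c^\ell$. Then I would iterate over all strings $D\in \Sigma^{\le c^\ell\cdot \ell}$. For each candidate $D$ I would (i) verify that $D$ is a common supersequence of every $q\in Q$, which by a standard left-to-right scan takes $O(|D|)$ per unique probe and hence $O(c^\ell\cdot |D|)$ in total, and (ii) if so, compute $\borderlen(\placement,D)$ using the algorithm of Proposition~\ref{prop:linear} in time $O(|\probeset|+|Q|^2\cdot|D|)$. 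Finally I would output a deposition sequence achieving the minimum border length encountered.

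For the running-time bookkeeping, each iteration costs $O(|\probeset|) + O(|Q|^2 \cdot |D|) \le O(|\probeset|) + c^{2\ell}\cdot c^\ell\cdot \ell = O(|\probeset|) + c^{O(\ell)}$, and there are at most $c^{c^\ell\cdot \ell}$ iterations. The preprocessing in Proposition~\ref{prop:linear} already absorbs the linear-in-$|\probeset|$ pass into a single additive factor, so the total time is bounded by $c^{c^{O(\ell)}}\cdot|\probeset|$ as claimed. Correctness follows because every feasible solution can, by Observation~\ref{obs:good}, be replaced by a good deposition sequence of the same cost, and all such sequences are enumerated.

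I do not expect a genuine obstacle here: the nontrivial content is already packaged in Observation~\ref{obs:good} (restricting to good sequences), Lemma~\ref{lem:bounds} (length bound), and Proposition~\ref{prop:linear} (fast evaluation). The only mild care needed is to make sure the verification that $D$ is a common supersequence is done on the deduplicated set $Q$ rather than on $\probeset$ itself, so that the $|\probeset|$ factor appears only additively once per iteration and collapses into the stated bound.
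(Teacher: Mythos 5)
Your proposal is correct and matches the paper's own proof: both enumerate all (good) deposition sequences of length at most $c^{\ell}\cdot \ell$ using Lemma~\ref{lem:bounds} and evaluate each candidate with Proposition~\ref{prop:linear}, yielding the bound $c^{c^{\bigO{\ell}}}|\probeset|$. Your added explicit supersequence check on the deduplicated set $Q$ is a harmless refinement that the paper leaves implicit.
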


\begin{proof}
By Lemma~\ref{lem:bounds}, it suffices to search for deposition sequences of length at most $c^{\ell}\cdot \ell$. We loop through all of the at most $c^{c^{\ell}\cdot \ell}$ such deposition sequences, and for each sequence $D$ we compute $\borderlen(\placement, D)$ in time $O(|\probeset|+p^2\cdot |D|)$ by Proposition~\ref{prop:linear}. By Observation~\ref{obs:bounds} and Lemma~\ref{lem:bounds}, we obtain that $O(|\probeset|+p^2\cdot |D|)=O(|\probeset|+c^{3\ell}\ell)$, which altogether yields the runtime bound of $c^{c^{O(\ell)}}|\probeset|$.
\end{proof}

\subsection{Fpt-Algorithm for $\BMP_{c,\ell,r}$}
\label{sub:fptbmpclr}

We first introduce some notation for our arrays. Given an $r\times m$ array $A$, a \emph{column} is an $r\times 1$ sub-array of $A$. 
A \emph{column placement} into a column of $A$ is a mapping $\placement: [r]\rightarrow \probeset$ from the cells of $A$ to the multiset of probes.

\begin{observation}
\label{obs:columntypes}
For any instance $(\probeset, r, m)$ of $\BMP$, it holds that there are at most $c^{\ell\cdot r}$ distinct column placements.
\end{observation}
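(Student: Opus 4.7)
My plan is a short counting argument that builds directly on Observation~\ref{obs:bounds}. A column placement maps each of the $r$ cells of a column to a probe in $\probeset$. Because the border length, the masks, and the embeddings depend only on the string of characters written in each cell and not on which particular (possibly repeated) copy of a sequence from the multiset $\probeset$ is chosen, I would identify two column placements whenever they assign identical strings to every one of the $r$ cells, and then count equivalence classes under this identification.

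Under this identification, the count reduces to a product-rule bound. By Observation~\ref{obs:bounds}, the number of distinct sequences that can appear in $\probeset$ is at most $c^{\ell}$. Since each of the $r$ cells can independently be filled by any of these at most $c^{\ell}$ distinct sequences, the total number of distinct column placements is at most $(c^{\ell})^{r}=c^{\ell\cdot r}$, which is exactly the claimed bound.

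There is really no obstacle here: the argument is a one-line application of the multiplication principle combined with the previously established bound on the number of distinct probes. The only point worth spelling out explicitly is \emph{why} it is legitimate to count placements up to equality of cell content rather than as literal set-theoretic functions into the multiset $\probeset$. This is justified by the fact that every subsequent use of a column placement---in particular, in a column-by-column dynamic programming scheme for $\BMP_{c,\ell,r}$---will only need the string sitting in each cell in order to compute mask structures, embeddings, and the border-length contributions between neighboring columns. Hence the coarser notion of ``distinct column placement'' is precisely what the forthcoming algorithm will enumerate, and bounding its cardinality by $c^{\ell\cdot r}$ is exactly the quantity that will drive the fpt running time.
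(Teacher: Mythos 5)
Your argument is correct and is essentially the paper's own (implicit) justification: the paper states this observation without proof, but immediately afterwards notes that each column placement is identified with an $r$-tuple of sequences from $\probeset$, so the bound follows from Observation~\ref{obs:bounds} by the same product-rule count of at most $(c^{\ell})^{r}$ tuples that you give. Your explicit remark on why counting up to cell content (rather than as functions into the multiset) is the right notion matches the paper's statement that placements are closed under permutation of non-distinct sequences.
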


Hence for any fixed $r$ and $\probeset$, we can enumerate all possible column placements as $\placement_1,\placement_2,\dots, \placement_{c^{\ell\cdot r}}$.
Observe that, for any two column placements $\placement_t, \placement_{t'}$, it holds that either
\begin{inparaenum}[(i)]
\item $t=t'$ and $\placement_t(x)=\placement_{t'}(x)$ for all $x\in [r]$, or
\item $t\neq t'$ and $\placement_t(x)\neq\placement_{t'}(x)$ for at least one $x\in [r]$.
\end{inparaenum}

Any placement $\placement:s\in \probeset \mapsto (a\in \mathbb{N}, b\in \mathbb{N})$ into $A$ can be uniquely decomposed into a sequence of column placements $(\placement_{i(1)}, \placement_{i(2)},\dots \placement_{i(m)})$ 
where $\placement_{i(x)}(y)=\placement(x,y)$ and $i:[m]\rightarrow [c^{\ell\cdot r}]$.
The column placement $\placement_{i(j)}$ with $j\in [m]$ denotes that the $j$-th column of $A$ is of placement $i(j)$.
Furthermore, since $\placement$ is closed under permutation of non-distinct sequences in $\probeset$, each column placement can be uniquely identified by an $r$-tuple of sequences from $\probeset$, formally $\placement_{i(x)}=(s_1,s_2,\dots,s_r) \iff \placement_{i(x)}(y)=s_y$ for all $y\in [r]$. 

Next, we prove that when searching for optimal solutions for $\BMP$ it suffices to restrict ourselves to placements such that identical column placements appear in ``consecutive blocks''.

\begin{lemma}
\label{lem:consecutive}
Let $(\probeset, r, m)$ be an instance of $\BMP$, $D$ be a deposition sequence and $\placement$ be a placement which decomposes into $(\placement_{i(1)},\placement_{i(2)},\dots\placement_{i(m)})$. 
Then if there exist $a,b\in [m],~a+1<b,$ such that $\placement_{i(a)}=\placement_{i(b)}$ but $\placement_{i(a+1)}\neq \placement_{i(b)}$, then $\borderlen(\placement, D)\geq  \borderlen(\placement', D)$, where $\placement'$ decomposes into 
$$(\placement_{i(1)},\dots \placement_{i(a)},\placement_{i(b)}, \placement_{i(a+1)},\placement_{i(a+2)},\dots,\placement_{i(b-1)}, \placement_{i(b+1)}, \dots,\placement_{i(m)}).$$
\end{lemma}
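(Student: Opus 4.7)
The plan is to decompose $\borderlen(\placement, D)$ into contributions from vertical borders (between cells sharing a column) and horizontal borders (between cells in adjacent columns), and exploit that the transformation $\placement \to \placement'$ is merely a reordering of the columns. The vertical contributions are then immediately seen to be identical in $\placement$ and $\placement'$, since each column and its internal structure is preserved; only horizontal contributions can differ, so it suffices to compare these.

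To facilitate the comparison, for any two column placements $P=(p_1,\ldots,p_r)$ and $Q=(q_1,\ldots,q_r)$ I would define their horizontal border as $H_D(P,Q)=\sum_{k=1}^{r}\border_D(p_k,q_k)$. Since $\border_D$ is a Hamming distance and therefore a metric, $H_D$ inherits the triangle inequality coordinate-wise: $H_D(P,R)\leq H_D(P,Q)+H_D(Q,R)$ for any three column placements $P,Q,R$.

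Next I would identify the handful of horizontal adjacencies that actually change. Relocating column $b$ to sit between columns $a$ and $a+1$ preserves every adjacency not involving positions $a,a+1,b-1,b,b+1$. Assuming first $b<m$, the three adjacencies $(\placement_{i(a)},\placement_{i(a+1)})$, $(\placement_{i(b-1)},\placement_{i(b)})$, $(\placement_{i(b)},\placement_{i(b+1)})$ present in $\placement$ are replaced in $\placement'$ by $(\placement_{i(a)},\placement_{i(b)})$, $(\placement_{i(b)},\placement_{i(a+1)})$, $(\placement_{i(b-1)},\placement_{i(b+1)})$. Invoking the hypothesis $\placement_{i(a)}=\placement_{i(b)}$ yields $H_D(\placement_{i(a)},\placement_{i(b)})=0$, $H_D(\placement_{i(b)},\placement_{i(a+1)})=H_D(\placement_{i(a)},\placement_{i(a+1)})$, $H_D(\placement_{i(b-1)},\placement_{i(b)})=H_D(\placement_{i(b-1)},\placement_{i(a)})$, and $H_D(\placement_{i(b)},\placement_{i(b+1)})=H_D(\placement_{i(a)},\placement_{i(b+1)})$. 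Substituting and cancelling, the difference $\borderlen(\placement,D)-\borderlen(\placement',D)$ collapses to $H_D(\placement_{i(b-1)},\placement_{i(a)})+H_D(\placement_{i(a)},\placement_{i(b+1)})-H_D(\placement_{i(b-1)},\placement_{i(b+1)})$, which is non-negative by the triangle inequality for $H_D$.

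Finally, I would handle the boundary case $b=m$ separately: there the adjacency $(\placement_{i(b)},\placement_{i(b+1)})$ simply does not exist, and the analogous substitution gives $\borderlen(\placement,D)-\borderlen(\placement',D)=H_D(\placement_{i(b-1)},\placement_{i(a)})\geq 0$ without even invoking the triangle inequality. I do not foresee any real obstacle: the argument is a local exchange combined with a metric inequality, and the main care required is correctly enumerating the constant number of adjacencies that change, with the boundary case $b=m$ being strictly easier than the generic one.
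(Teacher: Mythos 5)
Your proposal is correct and follows essentially the same route as the paper: split the border length into within-column and between-column contributions, observe that only the between-column borders at positions $a$, $a+1$, $b-1$, $b$, $b+1$ change, use $\placement_{i(a)}=\placement_{i(b)}$ to cancel terms, and finish with the triangle inequality for the (coordinate-wise summed) Hamming distance. Your explicit treatment of the boundary case $b=m$ is a small extra care the paper leaves implicit, but it does not change the argument.
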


\begin{proof}
Recall that by Equation~\ref{eq:bmpcost}, $\borderlen(\placement, D)$ is equal to the sum of Hamming distances of embeddings $\border_{D}(s_p,s_q)$ between neighboring $s_p,s_q\in \probeset$. Since the embeddings, and hence also the Hamming distances, are the same for $\borderlen(\placement, D)$ as for $\borderlen(\placement', D)$, the only difference between these values may arise from which sequences are neighbors. 

We say that two neighboring cells $v_1=(x_1, y_1)$ and $v_2=(x_2, y_2)$
are $x$-neighbors if $|x_1 - x_2|=1$ and $y$-neighbors otherwise, i.e., if $|y_1 - y_2|=1$; let $\neighbor_x(v)$ and $\neighbor_y(v)$ contain the $x$-neighbors and $y$-neighbors of $v$, respectively.
Notice that $y$-neighborhoods are identical between $\placement$ and $\placement'$, since the latter is obtained by permuting whole columns of the former. On the other hand, consider the difference between $x$-neighboring sequences in $\placement$ and $\placement'$. Notice that $\placement'$ is obtained by a simple permutation of the column placements of $\placement$ and in particular these differ only in the borders between $\{\placement_{i(a)},\placement_{i(a+1)},\placement_{i(b-1)},\placement_{i(b)},\placement_{i(b+1)}\}$. For convenience, we use $\bd$ to denote the total ``horizontal'' border between two column placements; formally:
$$\bd(u,t)=\displaystyle \sum_{\scriptsize
    \begin{array}{c}
    \forall x\in[r]
    \end{array}
    } \border_{D}(\placement_{i(u)}(x),\placement_{i(t)}(x)).$$
    
Now we can express the difference between the border lengths of both placements as $\borderlen(\placement',D)=\borderlen(\placement,D)
+\bd(a,b)+\bd(b,a+1)+\bd(b-1,b+1)-\bd(a,a+1)-\bd(b-1,b)-\bd(b,b+1)$. Since $\placement_{i(a)}=\placement_{i(b)}$, it holds that $\bd(a,b)=0$ and $\bd(b,a+1)=\bd(a,a+1)$. Furthermore, since the triangle inequality holds for Hamming distances (and $\border_D$ is defined as a Hamming distance between two sequences), we obtain $\bd(b-1,b+1)-\bd(b-1,b)-\bd(b,b+1)\leq 0$. Hence we conclude that $\borderlen(\placement',D)\leq \borderlen(\placement,D)$.
\end{proof}

We say that a placement $\placement$ is \emph{consecutive} if it decomposes into column placements $(\placement_{i(1)},\placement_{i(2)},\dots\placement_{i(m)})$ where for each $\placement_{i(a)}, \placement_{i(b)}$ such that $\placement_{i(a)}= \placement_{i(b)}$ and $a<b$ it holds that $\placement_{i(a)}= \placement_{i(c)}$ for all $a<c<b$.

\begin{corollary}
\label{cor:consecutive}
For any $\BMP$ instance $(\probeset, r, m)$, there exists an optimal solution $(\placement, D)$ such that $\placement$ is consecutive.
\end{corollary}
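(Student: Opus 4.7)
The plan is to apply Lemma~\ref{lem:consecutive} repeatedly, starting from any optimal solution, until no further swaps are possible. Since each swap produces a placement whose border length (for the same deposition sequence $D$) does not exceed the current one, the final placement is still optimal, and the termination criterion is precisely that the placement is consecutive.

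Concretely, I would start with an arbitrary optimal solution $(\placement, D)$. If $\placement$ is not consecutive, then by definition there exist indices $a, b \in [m]$ with $a+1 < b$ and $\placement_{i(a)} = \placement_{i(b)}$ but $\placement_{i(a+1)} \neq \placement_{i(b)}$. Applying Lemma~\ref{lem:consecutive} yields a new placement $\placement'$ with $\borderlen(\placement', D) \leq \borderlen(\placement, D)$, and since $(\placement, D)$ was optimal, so is $(\placement', D)$. I would then repeat the argument with $\placement'$ in place of $\placement$.

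The only nontrivial point is termination: I need a potential function that strictly decreases with every application of the lemma. A convenient choice is
\[
\Phi(\placement) \;=\; \text{the number of maximal runs of identical column placements in the decomposition of }\placement.
\]
A placement is consecutive exactly when $\Phi(\placement)$ equals the number of distinct column placements appearing in the decomposition. When we perform the swap described in Lemma~\ref{lem:consecutive}, column $b$ is extracted from its run and inserted immediately after the run containing column $a$; since $\placement_{i(a)} = \placement_{i(b)}$, this strictly reduces the number of runs of this particular column placement by at least one (and possibly causes further merging on the side of the original position of $b$, but in any event it cannot introduce new runs). Hence $\Phi(\placement') < \Phi(\placement)$. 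Since $\Phi$ is a nonnegative integer bounded above by $m$, the procedure must terminate, and the resulting placement is consecutive.

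I do not foresee any real obstacle here; the main subtlety is just to phrase the termination argument correctly, which the potential $\Phi$ handles. Alternatively, one could order placements lexicographically by the sorted multiset of run-length positions and argue by well-foundedness, but the run-count potential is the cleanest presentation.
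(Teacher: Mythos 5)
Your overall strategy is exactly the one the paper uses: start from an optimal $(\placement,D)$, apply Lemma~\ref{lem:consecutive} repeatedly, note that each application preserves optimality, and argue termination. The gap is in the termination argument: it is \emph{not} true that every application of Lemma~\ref{lem:consecutive} strictly decreases the number $\Phi$ of maximal runs. The lemma moves only the \emph{single} column at position $b$, not the whole run containing it; if that run has length at least two, extracting one column leaves a nonempty run behind, and inserting the extracted column after $a$ merely extends the run ending at $a$, so the number of runs of that column placement --- and hence $\Phi$ --- can stay the same. Concretely, take $m=4$ with column placements $T,U,T,T$ (where $T\neq U$). The admissible choices are $a=1$ with $b\in\{3,4\}$, and either move produces $T,T,U,T$: three maximal runs before, three after, and the result is still not consecutive. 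So no choice of move rescues the claim $\Phi(\placement')<\Phi(\placement)$; your ``further merging on the side of the original position of $b$'' only occurs when $b$ was a singleton run.

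The gap is repairable without changing the strategy. One can check that $\Phi$ never increases, and that whenever it stays constant the run ending at $a$ grows by one while the (surviving) run containing $b$ shrinks by one, with the left-to-right order of runs preserved; hence a secondary potential such as $\sum_{j} j\cdot\lambda_j$, where $\lambda_1,\dots,\lambda_{\Phi}$ are the run lengths in left-to-right order, strictly decreases in that case, and the lexicographically ordered pair $\bigl(\Phi,\sum_j j\cdot\lambda_j\bigr)$ is bounded below, giving termination. Alternatively, an extremal argument over the finitely many column permutations of the starting placement (pick an optimal one minimizing such a pair and derive a contradiction if it is not consecutive) works and is essentially the well-foundedness alternative you mention in passing. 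For comparison, the paper's own proof is a one-line remark that the lemma need be applied at most $m$ times; it is even terser about termination than your write-up, but it does not commit to the false strict-decrease claim, which is the concrete flaw you would need to fix.
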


\begin{proof}
Let $(\placement', D)$ be a solution for $(\probeset, r, m)$. We can repeatedly apply Lemma~\ref{lem:consecutive} until we obtain a consecutive placement---notice that the number of times Lemma~\ref{lem:consecutive} can be applied is bounded by $m$.
\end{proof}

The next algorithm uses an Integer Linear Programming (ILP) subroutine.
ILP is a well-known framework for formulating problems and a powerful tool for the development of fpt-algorithms for optimization problems.
In following we only give a brief overview of the framework before we present the algorithm.

\begin{definition}[$p$-Variable Integer Linear Programming Optimization] Let $A\in \mathbb{Z}^{q\times p}, b\in \mathbb{Z}^{q\times 1}$ and $c\in \mathbb{Z}^{1\times p}$. The task is to find a vector $x\in \mathbb{Z}^{p\times 1}$ which minimizes the objective function $c\times \bar x$ and satisfies all $q$ inequalities given by $A$ and $b$, specifically satisfies $A\cdot \bar x\geq b$. The number of variables $p$ is the parameter.
\end{definition}

Lenstra \cite{Lenstra83} showed that \textsc{$p$-ILP}, together with its optimization
variant \textsc{$p$-OPT-ILP} (defined above), are in FPT. His running time was
subsequently improved by Kannan \cite{Kannan87} and Frank and Tardos
\cite{FrankTardos87} (see also \cite{FellowsLokshtanovMisraRS08}).

\begin{theorem}[\cite{FellowsLokshtanovMisraRS08,FrankTardos87,Kannan87,Lenstra83}]
\label{thm:pilp}
\textsc{$p$-OPT-ILP} can be solved using $\bigO{p^{2.5p+o(p)}\cdot L}$ arithmetic operations in space polynomial in $L$, $L$ being the number of bits in the input.
\end{theorem}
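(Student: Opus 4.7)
The plan is to defer entirely to the cited sources, since \textsc{$p$-OPT-ILP} being in \FPT is a foundational result that this paper imports rather than re-establishes; a few-line remark is all that is appropriate. Below I sketch the high-level chain of ideas one would summarize.

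First, I would invoke Lenstra's original argument for \textsc{$p$-ILP} feasibility. The core device is Khintchine's flatness theorem: for any convex body $K\subseteq \mathbb{R}^p$ containing no integer point, there exists a nonzero integer direction $d$ such that the projection of $K$ onto the line spanned by $d$ has width bounded by a function of $p$ alone. Combined with LLL-style lattice basis reduction applied to the lattice associated with $A\bar x \geq b$, this lets one either locate an integer point in $K$ directly or decompose $K$ into a bounded number of parallel lower-dimensional slices, each handled recursively. Iterating across $p$ levels yields an algorithm whose dependence on $p$ is captured by a function $f(p)$ and whose dependence on the encoding length $L$ is polynomial.

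Second, for the optimization variant I would reduce to feasibility via binary search on the objective: adding the constraint $c\cdot \bar x \leq t$ and searching for the smallest feasible $t\in [-2^L,2^L]$ incurs only an $O(L)$ multiplicative overhead. Alternatively, one can integrate the optimization directly into the recursion, as Kannan does.

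The main obstacle in reaching the stated $\bigO{p^{2.5p+o(p)}\cdot L}$ bound is the quantitative analysis. Lenstra's original bound is much weaker; the improvement requires (i) Kannan's sharper flatness analysis, which shrinks the branching factor of the recursion, and (ii) the Frank--Tardos preprocessing step, which replaces the input matrix $A$ and vectors $b,c$ by equivalent ones whose entries have bit-length bounded by a function of $p$ alone, so that each arithmetic operation along the recursion runs in time polynomial in $L$ rather than exponential in $p$. Carefully combining these two ingredients and controlling the size of the recursion tree is what yields the claimed exponent $2.5p+o(p)$ and the linear dependence on $L$.
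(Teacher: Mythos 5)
Your approach matches the paper exactly: Theorem~\ref{thm:pilp} is stated as an imported result with no proof given, deferring entirely to the cited works of Lenstra, Kannan, Frank and Tardos, and Fellows et al. Your supplementary sketch of the underlying machinery (flatness theorem, lattice basis reduction, reduction of optimization to feasibility, and the Frank--Tardos preprocessing needed for the stated operation count) is accurate and consistent with how those sources establish the bound.
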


We are now ready to prove the main theorem of this subsection. 
\begin{theorem}
\label{thm:BMPclr}
$\BMP_{c,\ell,r}$ is fixed parameter tractable, and there exists an algorithm for $\BMP_{c,\ell,r}$ which runs in time $c^{c^{\bigO{\ell\cdot r}}}\cdot |\probeset|$.
\end{theorem}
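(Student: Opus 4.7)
The plan is to combine three ingredients already in hand: the length bound $|D|\le c^\ell\cdot\ell$ on good deposition sequences (Lemma~\ref{lem:bounds}), the reduction to consecutive placements (Corollary~\ref{cor:consecutive}), and the fpt-tractability of bounded-dimension ILP (Theorem~\ref{thm:pilp}). The overall strategy is to enumerate both the deposition sequence and the ordering of column types in a consecutive placement, and then to use ILP to decide the multiplicity of each block.

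First I would enumerate every good deposition sequence $D$ of length at most $c^\ell\cdot \ell$; there are at most $c^{c^\ell\cdot \ell}$ of them. Fix such a $D$. By Observation~\ref{obs:columntypes} there are at most $N:=c^{\ell\cdot r}$ distinct column placements, and for each column type $\placement_i$ I would precompute the within-column cost $V_D(i) = \sum_{x=1}^{r-1} \border_D(\placement_i(x),\placement_i(x+1))$ and, for every ordered pair $(\placement_i,\placement_j)$, the between-column cost $H_D(i,j) = \sum_{x=1}^{r} \border_D(\placement_i(x),\placement_j(x))$. By Corollary~\ref{cor:consecutive} an optimal placement is a concatenation of blocks whose column types are pairwise distinct, so I would enumerate all ordered sequences $(t_1,\ldots,t_k)$ of distinct column types (at most $(N+1)!$ of them).

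For each pair (deposition sequence, ordering), the remaining task is to pick block multiplicities $m_1,\ldots,m_k$ consistent with the probe multiset, which I would phrase as a $k$-variable ILP: minimize $\sum_{j=1}^k m_j\cdot V_D(t_j) + \sum_{j=1}^{k-1} H_D(t_j,t_{j+1})$ (the second sum is a constant for a fixed ordering) subject to $m_j\ge 1$ for all $j$, $\sum_j m_j=m$, and, for each distinct probe $s$, the balance constraint $\sum_{j=1}^k m_j\cdot n_{t_j,s}=|\{i:s_i=s\}|$, where $n_{t_j,s}$ counts occurrences of $s$ in column type $t_j$. Since $k\le N$, Theorem~\ref{thm:pilp} solves each such ILP in time $N^{O(N)}$ times the bit-length of the input.

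For the running time, the product of the number of deposition sequences ($c^{c^{O(\ell)}}$), the number of orderings ($N!$), and the ILP time ($N^{O(N)}$) is dominated by the term $N^{O(N)}=c^{c^{O(\ell r)}}$. Tabulating probe counts up front costs $O(|\probeset|\cdot\ell)$, and the bit-length of each ILP contributes only a logarithmic factor in $|\probeset|$, so the final bound is $c^{c^{O(\ell\cdot r)}}\cdot|\probeset|$ as required. The main bookkeeping subtlety will be the $m_j\ge 1$ constraint: because the between-column cost has been baked into the objective as a constant determined by the ordering, each enumerated block must actually be realised, which in turn makes the enumeration over orderings with pairwise distinct types (rather than arbitrary sequences) the correct search space.
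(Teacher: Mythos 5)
Your proposal is correct and follows essentially the same route as the paper's proof: enumerate good deposition sequences via Lemma~\ref{lem:bounds}, enumerate ordered sequences of distinct column types justified by Corollary~\ref{cor:consecutive}, split the cost into a per-ordering horizontal constant plus a vertical cost linear in the block multiplicities, and solve for the multiplicities with a bounded-variable ILP (Theorem~\ref{thm:pilp}) under positivity and probe-count balance constraints. The bookkeeping point you flag about needing $m_j\geq 1$ matches constraint b) in the paper's ILP, so there is no substantive difference.
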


\begin{proof}
We give a multi-step algorithm for $\BMP_{c,\ell,r}$:

 \begin{enumerate}
 \item \label{step1}We branch on the choice of deposition sequence $D$. By Observation~\ref{obs:good}, it suffices to consider only good deposition sequences, and by Lemma~\ref{lem:bounds} the number of good deposition sequences is bounded by $c^{c^{O(\ell)}}$.
 \item In view of Corollary~\ref{cor:consecutive}, we branch on which column placements appear in $\placement$ and the order in which they appear. Formally, we construct the set of all distinct column placements $\cT=\{\placement_1,\dots\}$, branch on all nonempty subsets $\cT'\subseteq \cT$. We then branch on all mappings $f:[t]\rightarrow [|\cT|]$ where $t=|\cT'|$. Since $|\cT|\leq c^{\ell\cdot r}$ by Observation~\ref{obs:columntypes}, there are at most $O(c^{c^{O(\ell\cdot r)}})$ choices of $f$. 
 \end{enumerate}
 
 For each fixed $f$, we hence obtain a \emph{template} $Q_f=(\placement_{f(1)},\placement_{f(2)},\dots,\placement_{f(t)})$. A consecutive placement $\placement$ \emph{matches} a template $Q_f$ if there exists a \emph{multiplicity function} $h:t\rightarrow \mathbb{N}$ such that $\placement$ decomposes into $(h(1)\cdot \placement_{f(1)},h(2)\cdot \placement_{f(2)},\dots,h(t)\cdot \placement_{f(t)})$ where $x\cdot \placement_z$ is shorthand for $x$ consecutive copies of $\placement_z$.
 
 \begin{enumerate}
 \item[3.] We compute the following constants:
 \begin{itemize}
 \item For each column placement $\placement_i=(s_1,s_2,\dots,s_r)\in \cT'$ we compute the total cost of its ``vertical borders'' $\bd^{vert}_i$ as follows: $$\bd^{vert}_i=\displaystyle \sum_{\scriptsize
     \begin{array}{c}
     \forall z\in[r-1]
     \end{array}
     } \border_{D}(s_z, s_{z+1}).$$
 \item We also compute the total ``horizontal cost'', which depends only on $D$ and $Q_f$ (since identical column placements do not have horizontal borders), as follows: $$\horizontalcost=\displaystyle \sum_{\scriptsize
     \begin{array}{c}
     \forall z\in[r], w\in [t-1] 
     \end{array}
     } \border_{D}(\placement^{-1}_{f(w)}(z),\placement^{-1}_{f(w+1)}(z)).$$
 \item For each distinct $s\in \probeset$ let $\#_s$ contain the number of occurrences of $s$ in $\probeset$.
 \item For each distinct $s\in \probeset$ and $\placement_i$ let $\#_s^i$ contain the number of occurrences of $s$ in $\placement_i$.
 \end{itemize}
 \item[4.] We construct and solve an \textsc{$p$-OPT-ILP} instance $\cI$ to compute the multiplicity function $h$ which contains the ``vertical cost'' variable $\verticalcost$, the variables $h(1), \dots, h(t)$ and the following constraints:
 \begin{enumerate}
 \item[a)] For each distinct $s\in \probeset$: $\#_s=\displaystyle \sum_{\scriptsize
     \begin{array}{c}
     \forall z\in[t]
     \end{array}
     } h(z)\cdot \#_s^z$.
 \item[b)] $\forall z\in [t]:h(z)>0$.
 \item[c)] $\verticalcost=\displaystyle \sum_{\scriptsize
     \begin{array}{c}
     \forall z\in[t]
     \end{array}
     } h(z)\cdot \bd^{vert}_z$.
 \item[d)] Minimize $\verticalcost$.
 \end{enumerate}
 The intuition of the constraints is as follows.
 Constraints of type a) ensure that the choice of multiplicities does not introduce too many/too few occurrences of some probe $s$ in the array.
 By the constraints of type b) it is ensured that the multiplicities are strictly positive.
 With help of constraint c) the vertical border cost for a certain choice of multiplicities is computed, which is in turn minimized by constraint d).
 
 \item[5.] Finally, for each choice of $D$, $\cT'$ and $f$ we store $\verticalcost+\horizontalcost$ and the table of values $h=(h(1),\dots,h(t'))$ from the optimal solution of $\cI$. After the branching is complete, we choose an arbitrary branch with minimum $\verticalcost+\horizontalcost$ and read the values $D, f, h$ associated with this branch. The algorithm then outputs $(\placement,D)$ where $\placement$ is computed from the template $Q_f$ given by $f$ and the multiplicity function given by $h$.
 \end{enumerate}
 
 \paragraph{Running time.} The number of branches processed after Step $1$ and Step $2$ is bounded by $c^{c^{O(\ell)}}2^{c^{\ell \cdot r}}c^{c^{O(\ell\cdot r)}}=c^{c^{O(\ell\cdot r)}}$ and this branching can be initialized in $O(|\probeset|)$ time. Step $3$ and the construction of $\cI$ can both also be completed in linear time, assuming multisets are implemented via a multiplicity function. $\cI$ contains $t\leq c^{\ell\cdot r}$ variables and has size linear in $\probeset$, and can thus be solved in time at most $c^{c^{O(\ell\cdot r)}}\cdot |\probeset|$ by Theorem~\ref{thm:pilp}. The time required to process Step $5$ is easily seen to be dominated by Step $1$ and $4$.
 
 \paragraph{Correctness.} Assume for a contradiction that the algorithm outputs $(\placement,D)$ but there exists an optimal solution $(\placement',D')$ such that $\borderlen(\placement',D')<\borderlen(\placement,D)$. Consider the template $Q'_f$ and multiplicity function $h'$ associated with $\placement'$. During the computation of our algorithm, the branch of $Q'_f$ and $D'$ had correctly computed the $\horizontalcost'$ component of $\borderlen(\placement',D')$. Furthermore, since $(\placement',D')$ is optimal, we obtain that $h'$ must be an optimal solution for the \textsc{$p$-OPT-ILP} instance $\cI'$ constructed for this branch; let $\verticalcost'$ be the output of $\cI'$. Then $\borderlen(\placement',D')=\verticalcost'+\horizontalcost'$ implies that $\verticalcost'+\horizontalcost'<\verticalcost+\horizontalcost$, which contradicts the assumed choice of branch $D$ and $Q_f$ in Step $5.$
 \end{proof}

\subsection{Fpt-Algorithm for $\PBMP_{c,o}$}
\label{sub:fptpbmpco}

Given an $r\times m$ array, a mask $\mask$ %
is called \emph{trivial} if $\mask(i,j)\neq\text{``$-$''}$ for all $i\in [r], j\in [m]$.
Given a deposition sequence $D$, we say that a subsequence $D'$ of $D$ is \emph{primal} if it is obtained from $D$ by deleting all characters which are associated with a trivial mask.
Notice that the border length of each mask associated with each character in a primal sequence is at least one, and the border length of all trivial masks is $0$. For the purpose of providing concise running times, we use $n$ to denote the size of the input. 

\begin{observation}
\label{obs:num_prim_dep_seq}
For any instance of $\PBMP$ and $\BMP$, the number of primal sequences is bounded by $\sum_{i=1}^{o}c^i\leq o\cdot c^o$.
\end{observation}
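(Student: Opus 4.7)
The plan is to bound the length of any primal sequence that arises from a feasible solution (one with border length at most $o$), and then to count the number of sequences of that bounded length over the alphabet $\Sigma$.

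First, I would argue that within any good deposition sequence $D$ (which by Observation~\ref{obs:good} is without loss of generality), every character that appears in the corresponding primal sequence $D'$ is associated with a mask $\mask$ of border length at least one. By definition of ``primal'', the mask $\mask$ is non-trivial, so at least one cell of $\mask$ is ``$-$''. Since $D$ is good, the character is also present in at least one embedding, so at least one cell of $\mask$ shows the deposited character. The $r\times m$ array being connected under the neighbor relation, there must exist neighboring cells $(i_1,j_1)$ and $(i_2,j_2)$ whose $\mask$-values differ, contributing at least one to $\borderlen(\mask)$.

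Second, by Equation~\eqref{eq:maskcost}, the total border length of $(\placement,D)$ is the sum over all masks of their individual border lengths. Trivial masks contribute $0$, and each of the $|D'|$ non-trivial masks contributes at least $1$ by the previous paragraph. Hence if $\borderlen(\placement,D) \le o$ then $|D'| \le o$.

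Finally, the counting step: any primal sequence is simply a string over $\Sigma$ of length at most $o$, so the number of distinct primal sequences is at most $\sum_{i=1}^{o} c^i$, which is bounded by $o\cdot c^o$ (the $i=0$ case, the empty primal sequence, is degenerate and can either be included harmlessly or handled separately). There is no significant obstacle here; the only delicate point is the use of connectivity of the grid in the border-length argument, but this is immediate for an $r\times m$ array with $r,m\ge 1$.
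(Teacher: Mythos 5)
Your proposal is correct and follows exactly the route the paper intends: the paper states the key fact (each mask associated with a character of a primal sequence has border length at least one, while trivial masks contribute zero) in the sentence immediately preceding the observation, from which $|D'|\leq o$ and the count $\sum_{i=1}^{o}c^i\leq o\cdot c^o$ follow as you describe. Your explicit connectivity argument for why a non-trivial mask of a good deposition sequence has border length at least one is a detail the paper leaves implicit, but it is the right justification.
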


Additionally, since the number of ``borders'' between distinct probes is bounded from below by the number of distinct probes, we obtain:

\begin{observation}
\label{obs:bounded_num_probes}
Given a multiset $\probeset$ of probes. 
For any \yes-instance of $\PBMP$ and $\BMP$ over $\probeset$, the number of distinct probes in $\probeset$ is upper-bounded by $o+1$.
\end{observation}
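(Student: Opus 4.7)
The plan is to show that any placement of $k$ distinct probes on the grid forces at least $k-1$ units of border length, so that $k-1 \leq o$ in any \yes-instance.

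First, I would observe that distinct probes yield distinct embeddings: because we are in the exhaustive setting, the embedding $\embed_i$ of a probe $s_i$ with respect to any deposition sequence $D$ is uniquely determined, and one can recover $s_i$ from $\embed_i$ simply by deleting the ``$-$'' characters. Consequently, if $s_i \neq s_j$ then $\embed_i \neq \embed_j$, and so $\border_D(s_i,s_j) \geq 1$. Hence every pair of neighboring cells holding different probes contributes at least $1$ to $\borderlen(\placement,D)$ in the sum of Equation~\ref{eq:bmpcost}.

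Next, I would use a standard connectivity argument on the grid viewed as a graph. Suppose the multiset $\probeset$ contains $k$ distinct probes. Any placement $\placement$ induces a partition of the cells of the $r\times m$ array into $k$ nonempty classes (one per distinct probe). Since the grid graph is connected, the quotient multigraph obtained by contracting each class to a single vertex is also connected on $k$ vertices, and hence has at least $k-1$ edges. Each such edge corresponds to a pair of neighboring cells whose probes differ, and by the previous step each contributes at least $1$ to $\borderlen(\placement,D)$.

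Combining these two facts yields $\borderlen(\placement,D) \geq k-1$ for any placement $\placement$ and any deposition sequence $D$. In a \yes-instance we require $\borderlen(\placement,D) \leq o$, so $k-1 \leq o$, i.e., $k \leq o+1$, as claimed. No step is really a hard obstacle; the only subtlety is noting that the exhaustive rule makes embeddings injective in the probes, which is what rules out the degenerate case of two distinct probes having identical embeddings and thus zero mutual border contribution.
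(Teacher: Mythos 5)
Your proof is correct and takes essentially the same route as the paper, which states this observation without a formal proof and justifies it only by the remark that the number of borders between distinct probes is bounded from below by the number of distinct probes. Your contraction/connectivity argument pins down the precise count of $k-1$ forced unit-cost borders (rather than $k$), which is exactly what the stated bound of $o+1$ distinct probes requires, and your note that the exhaustive rule makes embeddings injective in the probes correctly handles the only potential degeneracy.
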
 

\begin{lemma}
\label{lem:primDepSeq2DepSeq}
For any instance of $\PBMP$ and $\BMP$, any primal sequence $D'$ corresponds to at most one good deposition sequence $D$. Furthermore, there exists an algorithm which runs in time $\bigO{o\cdot n}$ and which either computes this $D$ from $D'$ or correctly outputs that no such $D$ exists.
\end{lemma}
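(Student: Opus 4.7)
The plan is to reconstruct $D$ from $D'$ by a deterministic simulation of the synthesis process, inserting the characters associated with trivial masks precisely at the positions where they are forced. First I would establish the following characterization for any good $D$: whenever the synthesis reaches an intermediate state in which every probe is still incomplete and all of them share the same next character $c$, the very next character of $D$ must be $c$ and it applies as a trivial mask. The forward direction will use the exhaustive rule together with goodness: if $D$ were instead to continue with any $c'\neq c$ in this state, the corresponding mask would be opaque at every cell of the array (no probe has $c'$ as its current character), so $c'$ would be a redundant position of $D$, contradicting goodness. The reverse direction is a direct unfolding of the definition of a trivial mask. I expect this forward direction to be the main subtlety of the proof, as it is where the goodness assumption on $D$ must be carefully used to rule out alternative continuations.

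Uniqueness follows immediately from this characterization. At each intermediate state of the synthesis, exactly one of two situations arises: either all remaining probes are incomplete and share a common leading character, in which case the next character of $D$ is forced to be that character as a trivial mask; or the remaining probes do not share a common leading character (either because some probe is already complete, or because at least two distinct leading characters appear), in which case the next character of $D$ applies as a non-trivial mask and must therefore coincide with the next unread character of $D'$. Since the next character of $D$ is forced in both cases, at most one good $D$ is consistent with $D'$.

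The algorithm mirrors this characterization directly. It maintains a pointer $p_s$ into each distinct probe $s$, initialized to $0$, and processes $D'$ left-to-right. Before consuming each $D'[i]$, it repeatedly tests whether every $p_s$ satisfies $p_s<|s|$ and all current characters $s[p_s]$ agree on a common value $c$; whenever this holds, it appends $c$ to $D$ and advances every pointer. It then attempts to consume $D'[i]$: if no probe has $s[p_s]=D'[i]$ the algorithm reports ``no valid $D$'' (since $D'[i]$ would be a redundant position of the reconstructed $D$), and otherwise it appends $D'[i]$ to $D$ and advances every matching pointer. After $D'$ is exhausted the trivial-mask check is performed one last time and the algorithm verifies that every pointer has reached the end of its probe; any failure here is likewise reported. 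For the running time, Observation~\ref{obs:bounded_num_probes} bounds the number of distinct probes by $o+1$, so each trivial-mask test and each pointer update costs $\bigO{o}$; and the length of a good $D$ is at most $n$, since every position of $D$ carries a non-dash in some embedding and the total number of non-dash entries across all embeddings equals the total length of the probes, which is at most the input size $n$. Combining these bounds yields the claimed $\bigO{o\cdot n}$ running time.
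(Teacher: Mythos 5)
Your proposal is correct and follows essentially the same route as the paper: greedily apply trivial masks whenever every remaining probe is incomplete and agrees on its next character, and otherwise consume the next character of $D'$, rejecting on any inconsistency. The paper simply states the algorithm and asserts uniqueness from its determinism, whereas you additionally spell out why any good $D$ with primal sequence $D'$ is forced to agree with the algorithm at every step (via the exhaustive rule and goodness), which is a welcome, slightly more rigorous rendering of the same argument.
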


\begin{proof}

We provide the polynomial time algorithm to compute $D$ from $D'$; uniqueness follows by the fact that the algorithm is deterministic.

\begin{tabbing}
\textsc{Algorithm}$(D')$\\
1 \quad \= $(i:=1)$\\
2 \> \= Check whether a trivial mask for any character $x\in \Sigma$ can be applied. \\
3 \> \quad \= If not, go to 5.\\
4 \>  \> \= If yes, apply it, set $D:=D+x$, and go to 2.\\
5 \> \= Apply the mask for $D'[i]$. Set $D:=D+D'[i]$.\\
6 \> \= $i:=i+1$.\\
7 \> \= If $(i\leq |D'|)$ then go to 2.\\
8 \> \= Check whether a trivial mask for any character $x\in \Sigma$ can be applied. \\
9 \> \quad \= If not, go to 11.\\
10 \> \> \= If yes, apply it, set $D:=D+x$, and go to 8.\\
11 \> \= If there remains a nonempty probe $s$, then \textbf{reject}.\\
12 \> \= \textbf{Output} $D$.\\
\end{tabbing}

The algorithm runs in time $O(|D'|\cdot (c+|\probeset|\cdot max_{s\in \probeset}|s|))=O(o\cdot n)$. Correctness follows from the definition of primal sequences.
\end{proof}

\begin{theorem}
\label{thm:PBMPco}
$\PBMP_{c,o}$ is fixed-parameter tractable, and there exists an algorithm for $\PBMP_{c,o}$ which runs in time $\bigO{oc^o\cdot (n+o^2)}$.
\end{theorem}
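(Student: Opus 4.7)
The plan is to exhaustively enumerate primal sequences, and for each one to reconstruct the unique corresponding good deposition sequence and evaluate its border length. First, by Observation~\ref{obs:bounded_num_probes}, we reject immediately whenever $\probeset$ contains more than $o+1$ distinct probes. Otherwise, since each character of a primal sequence is associated with a non-trivial mask (some cell must be opaque) and any non-trivial mask contributes border length at least~$1$, every \yes-instance must admit a primal sequence of length at most~$o$; by Observation~\ref{obs:num_prim_dep_seq} there are at most $oc^o$ such sequences, and we enumerate all of them explicitly. For each candidate primal sequence $D'$ the algorithm invokes Lemma~\ref{lem:primDepSeq2DepSeq}: if it fails, $D'$ is discarded; otherwise it returns the unique good deposition sequence $D$, whose border length $\borderlen(\placement,D)$ is then evaluated via Proposition~\ref{prop:linear}. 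The final output is the candidate minimizing this border length, provided it does not exceed~$o$.

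Correctness will follow from Observation~\ref{obs:good}: any optimal solution can be assumed to use a good deposition sequence $D^\ast$, whose primal subsequence $D'^\ast$ has length at most $\borderlen(\placement,D^\ast)\leq o$ and is therefore among the enumerated candidates; Lemma~\ref{lem:primDepSeq2DepSeq} then recovers $D^\ast$ from $D'^\ast$, so the algorithm will evaluate $D^\ast$ along with its true cost and can only output a solution that is at least as good.

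The main obstacle is to match the claimed running time $\bigO{oc^o\cdot(n+o^2)}$. The approach I would take is a single $\bigO{n}$ preprocessing pass that identifies the (at most $o+1$) distinct probes and tabulates, for every ordered pair of distinct probes, the number of neighbouring array cells realising that pair. Given this table, for each candidate the reconstruction of Lemma~\ref{lem:primDepSeq2DepSeq} contributes $\bigO{n}$ work, and the border length is assembled from pairwise contributions over the at most $(o+1)^2$ distinct-probe pairs using the precomputed neighbour counts and Proposition~\ref{prop:linear} restricted to distinct probes, yielding a per-candidate cost of $\bigO{n+o^2}$ and hence the overall bound.
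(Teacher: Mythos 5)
Your proposal follows essentially the same route as the paper: branch over the at most $oc^o$ primal sequences (Observation~\ref{obs:num_prim_dep_seq}), reconstruct the unique corresponding good deposition sequence via Lemma~\ref{lem:primDepSeq2DepSeq}, evaluate the border length via Proposition~\ref{prop:linear}, and return the minimizer, with correctness resting on Observations~\ref{obs:good} and~\ref{obs:bounded_num_probes}. Your running-time accounting is in fact slightly more explicit than the paper's one-line justification, since achieving the per-candidate cost of $\bigO{n+o^2}$ requires precisely the kind of preprocessing over the at most $o+1$ distinct probes that you describe.
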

\begin{proof}
This algorithm builds upon Observation~\ref{obs:num_prim_dep_seq}.
We can branch on all primal sequences.
For each candidate sequence $D'$ we check whether the primal sequence corresponds to a deposition sequence $D$ via Lemma~\ref{lem:primDepSeq2DepSeq}. For each such $D$, we compute and store $\borderlen(\placement,D)$. 
Finally, a solution with a minimum $\borderlen(\placement,D)$ is selected.
Observe that an applicable trivial mask can be found in linear time.
Along with Observation~\ref{obs:bounded_num_probes}, this yields a total runtime of $\bigO{oc^o\cdot (n+o^2)}$ by Proposition~\ref{prop:linear} and Lemma~\ref{lem:primDepSeq2DepSeq}.
\end{proof}

\subsection{Fpt-Algorithm for $\BMP_{c,o}$}
\label{sub:fptbmpco}

For a multiset $\probeset$ and $s\in \probeset$, we denote by $\probeset^{-s}$ the set of sequences in $\probeset$ which are distinct from $s$. An instance $(\probeset, r, m,o)$ of $\BMP_{c,o}$ is then called \emph{$s$-enveloped} if $|\probeset^{-s}|\leq o^2$.

\begin{lemma}
\label{lem:sea}
Any instance $(\probeset, r, m,o)$ of $\BMP_{c,o}$ such that $r>o$ and $m>o$ which is not $s$-enveloped for any $s\in \probeset$ is a no-instance.
\end{lemma}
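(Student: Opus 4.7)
The plan is to prove the contrapositive: any instance $(\probeset,r,m,o)$ with $r,m>o$ admitting a solution $(\placement,D)$ of border length at most $o$ must be $s$-enveloped for some $s\in\probeset$. Since the exhaustive rule makes embeddings uniquely determined by their probes (distinct probes yield distinct embeddings of Hamming distance at least $1$), every pair of neighboring cells storing different probes contributes at least $1$ to $\borderlen(\placement,D)$. Hence the set of \emph{border edges}---grid edges joining cells with different probes---has cardinality $B\le o$.

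I would then delete the $B$ border edges from the grid and analyse the resulting monochromatic connected components $C_1,\dots,C_k$, ordered so that $|C_1|\ge|C_2|\ge\cdots$. Let $s^{\ast}$ denote the probe of the largest component $C_1$; since the cells carrying $s^{\ast}$ include all of $C_1$, we have $|\probeset^{-s^{\ast}}|\le rm-|C_1|=\sum_{i\ge 2}|C_i|$. Thus it suffices to bound $\sum_{i\ge 2}|C_i|\le o^2$ in order to conclude that the instance is $s^{\ast}$-enveloped, contradicting the hypothesis. For each $i\ge 2$ the edge-boundary $\partial_i$ of $C_i$ consists only of removed border edges, so $\partial_i\le B\le o$; combined with $|C_i|\le|C_1|$ and the disjointness $|C_i|+|C_1|\le rm$ this yields $|C_i|\le rm/2$.

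The technical core is then a grid-isoperimetric bound of the form $|C_i|\le\partial_i^{\,2}/4$ for each non-majority $C_i$. The hypotheses $r,m>o\ge\partial_i$ are indispensable: writing $A_j$ for the column-set of $C_i$ in row $j$, the triangle inequality applied to the symmetric differences $|A_j\triangle A_{j+1}|$ (whose sum is at most $\partial_i\le o$) shows that no non-majority $C_i$ can span a full row or column, since spanning a row would force every other row of $C_i$ to contain at least $m-o$ cells, producing $|C_i|>rm/2$ as soon as $\min(r,m)>2o$ and excluding the borderline regime by a direct count. Consequently the row-extent $r_i'$ and column-extent $m_i'$ of $C_i$ are strictly below $r$ and $m$, and $C_i$ occupies no full row or column within its enclosing $r_i'\times m_i'$ sub-rectangle. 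A projection argument---each row of the row-extent contributes at least one horizontal boundary edge, and each column of the column-extent at least one vertical boundary edge---then yields $r_i'+m_i'\le\partial_i$, whence $|C_i|\le r_i'm_i'\le\bigl((r_i'+m_i')/2\bigr)^{2}\le\partial_i^{\,2}/4$ by AM-GM. Carefully ruling out the ``slab-like'' configurations in the narrow regime $o<\min(r,m)\le 2o$ is the main technical obstacle.

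Finally, each removed border edge lies on at most two component boundaries, so $\sum_{i\ge 2}\partial_i\le 2B\le 2o$. Applying the elementary inequality $\sum x_i^2\le (\sum x_i)^2$ for nonnegative $x_i$ gives
\[
\sum_{i\ge 2}|C_i|\;\le\;\sum_{i\ge 2}\frac{\partial_i^{\,2}}{4}\;\le\;\frac{\bigl(\sum_{i\ge 2}\partial_i\bigr)^{2}}{4}\;\le\;\frac{(2o)^{2}}{4}=o^{2},
\]
completing the contradiction and establishing the lemma.
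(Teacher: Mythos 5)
Your high-level strategy (a border length of at most $o$ forces at most $o$ ``bichromatic'' grid edges, hence the non-majority part of the array is small) is sound, and the final bookkeeping ($\sum_{i\ge 2}\partial_i\le 2B$ and $\sum_i\partial_i^2\le(\sum_i\partial_i)^2$) is fine. The genuine gap is exactly the step you flag yourself: the isoperimetric bound $|C_i|\le\partial_i^{\,2}/4$ is derived only for components spanning no full row and no full column, and you never actually exclude spanning components among the $C_i$ with $i\ge 2$. Your symmetric-difference argument yields a contradiction only when $\min(r,m)>2o$, and it says nothing about a component that spans both a full row and a full column, for which the ``every column contributes a vertical boundary edge'' count breaks down. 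The hole is fillable: in that case $C_i$ meets every row and every column, each row (column) it does not fully occupy contributes a horizontal (vertical) boundary edge, and since $|C_i|\le rm/2$ at most half of the rows and half of the columns can be fully occupied, giving $\partial_i\ge (r+m)/2>o\ge B$, a contradiction. But as submitted the proof is incomplete at its central claim.

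Beyond the gap, the machinery is much heavier than needed. The paper's proof avoids components and isoperimetry entirely: call a row or column \emph{uniform} if all its cells carry pairwise non-distinct probes. Every non-uniform row and every non-uniform column contributes at least $1$ to the border length, so there are at most $o$ of each; since $r,m>o$, some uniform row and some uniform column exist, and all uniform rows and columns must be uniform in the \emph{same} probe $s$ (two uniform columns carrying distinct probes would make all $r>o$ rows non-uniform). Hence every cell carrying a probe distinct from $s$ lies in the intersection of a non-uniform row with a non-uniform column, and there are at most $o^2$ such cells. If you want to salvage your write-up, this ``at most $o$ bad rows and at most $o$ bad columns'' observation is the shortcut that replaces the entire component analysis.
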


\begin{proof}
Consider any placement $\placement$. For $s\in \probeset$, we say that a column (or row) is \emph{$s$-uniform} (w.r.t. $\placement$) if all cells in the column (or row) are only assigned sequences which are not distinct from $s$. Furthermore, we say that a column (or row) is \emph{uniform} if all cells in the column (or row) are not distinct from some sequence in $\probeset$.

Each non-uniform column and each non-uniform row contains at least one tuple of neighboring distinct sequences, which (regardless of $D$) contributes to an increase of $\borderlen(\placement,D)$ by at least $1$. Hence any solution $(\placement,D)$ of $(\probeset, r, m,o)$ must contain at most $o$ rows and at most $o$ columns which are not uniform. Furthermore,  
if there exists an $s$-uniform column (or row) for some $s\in \probeset$, then all other uniform columns (rows) must also be $s$-uniform---otherwise $\placement$ would contain more than $o$ non-uniform rows (columns), which we have already argued cannot happen.

To complete the proof, consider the possible cells where a sequence which is distinct from $s$ may appear. Clearly such sequences may only appear in the at most $o$ non-uniform columns and in the at most $o$ non-uniform rows, and these intersect in at most $o^2$ cells.
\end{proof}

We now consider two specific subcases of the problem before giving the theorem.

\begin{lemma}
\label{lem:BMPcoone}
There is an algorithm which solves any instance $(\probeset, r, m,o)$ of $\BMP_{c,o}$ such that $m>2o$ and $r>2o$ in time $\bigO{o^3\cdot c^o\cdot (n+o^2)}$.
\end{lemma}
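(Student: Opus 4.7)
The algorithm will combine three ingredients: (i) Lemma~\ref{lem:sea} to pin down a ``majority'' probe $s$; (ii) a Lemma~\ref{lem:consecutive}-type swap argument to constrain where the non-uniform rows and columns can sit; and (iii) the $\PBMP_{c,o}$ routine from Theorem~\ref{thm:PBMPco} as a subroutine once a candidate placement has been fixed.

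First, since $m>o$ and $r>o$, Lemma~\ref{lem:sea} tells us that any yes-instance must be $s$-enveloped for some $s\in\probeset$. By Observation~\ref{obs:bounded_num_probes} a yes-instance contains at most $o+1$ distinct probes, so I only need to iterate over $O(o)$ candidates for $s$, and for each candidate verify $|\probeset^{-s}|\le o^2$ in linear time; if no candidate survives, the instance is reported as a no-instance. From the proof of Lemma~\ref{lem:sea} we additionally know that, in any valid placement, at most $o$ rows and at most $o$ columns are non-uniform and all non-$s$ probes occupy their intersection.

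Next, an argument in the spirit of Lemma~\ref{lem:consecutive}, repeatedly swapping a uniform $s$-row or $s$-column across a non-uniform one (the triangle inequality for Hamming distance keeps the border length from growing), allows me to assume the non-uniform rows are rows $1,\ldots,k_r$ and the non-uniform columns are columns $1,\ldots,k_c$ for some $k_r,k_c\in[o]$. Enumerating $(s,k_r,k_c)$ thus costs $O(o^3)$ branches. Having fixed this frame, I observe that the border length of the full $r\times m$ array equals the border length of the placement restricted to the $(k_r+1)\times(k_c+1)$ sub-grid obtained by appending one row and one column of $s$ to the $k_r\times k_c$ intersection; every other neighbor pair in the array is $s$--$s$ and contributes zero. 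So for each branch I reduce to a bounded-size instance, fix a placement of the non-$s$ probes inside the intersection, and invoke the $\PBMP_{c,o}$ algorithm of Theorem~\ref{thm:PBMPco} on the extended sub-grid to obtain the optimum deposition sequence and its cost, keeping the best value across all branches.

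The main difficulty is showing that only a small (polynomial in $o$) number of placements per $(s,k_r,k_c)$ branch actually has to be examined, since the naive count of arrangements of the multiset $\probeset^{-s}$ inside $k_r\cdot k_c\le o^2$ cells is as large as $(o+1)^{o^2}$ and would blow the target running time. I expect to close this gap either by a structural canonical-form argument (placements related by simple symmetries of the enveloped sub-grid yield identical border length, so only one representative per equivalence class must be tried) or, alternatively, by dispatching the bounded sub-instance to Theorem~\ref{thm:BMPclr} after a preprocessing step that trims common prefixes and suffixes of probes so that the \emph{effective} probe length entering the sub-problem becomes $O(o)$. This structural step is the technical heart of the proof; once it is in place, the overall running time $\bigO{o^3\cdot c^o\cdot (n+o^2)}$ falls out of combining the $O(o^3)$ branches with the $\bigO{c^o(n+o^2)}$ per-branch cost of the $\PBMP_{c,o}$ subroutine (absorbing its extra factor of $o$ into the outer enumeration).
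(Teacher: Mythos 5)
Your overall strategy matches the paper's: identify the majority probe $s$ via Lemma~\ref{lem:sea}, use consecutiveness (Corollary~\ref{cor:consecutive} and its row analogue) to confine the non-$s$ probes to an $O(o)\times O(o)$ region, and then run the $\PBMP_{c,o}$ routine of Theorem~\ref{thm:PBMPco} on each candidate placement. However, your proposal stops exactly at the step that constitutes the proof: you declare that the ``technical heart'' is showing that only polynomially many placements per branch need to be examined, and then offer two unexecuted routes to close it. Neither route works as sketched. The canonical-form idea only quotients by the constantly many symmetries of the grid and cannot collapse the $(o+1)^{o^2}$ arrangements of $\probeset^{-s}$ to polynomially many. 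Dispatching the sub-instance to Theorem~\ref{thm:BMPclr} would require bounding the probe length $\ell$ in terms of $c$ and $o$, which is not available (yes-instances can have arbitrarily long probes), and the claim that trimming common prefixes/suffixes preserves the optimal border length is itself unproven. As written, the proposal therefore has a genuine, acknowledged gap at its central step.

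For comparison, the paper does not attempt to reduce the number of candidate placements to polynomially many at all: it simply branches, for each of the at most $o^2$ probes distinct from $s$, over the at most $4o^2$ admissible cells (the first and last $o$ rows and columns, as forced by Corollary~\ref{cor:consecutive} applied in both directions), i.e., it accepts an enumeration of up to $(4o^2)^{o^2}$ placements and solves a $\PBMP_{c,o}$ instance for each. This suffices for fixed-parameter tractability and is absorbed by the $n\cdot c^{o^{\bigO{o}}}$ bound of Theorem~\ref{thm:BMPco}. Your observation that such an enumeration is not accounted for by the running time $\bigO{o^3\cdot c^o\cdot(n+o^2)}$ stated in the lemma is a fair criticism of that stated bound, but it does not substitute for completing the argument. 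To repair your write-up along the paper's lines, drop the demand for polynomially many placements, enumerate all assignments of $\probeset^{-s}$ to the $4o^2$ eligible cells, and justify the eligibility of exactly those cells via the consecutiveness argument; also note that $s$ need not be guessed among $O(o)$ candidates, since with $m,r>2o$ it is the strict majority element of $\probeset$ and hence unique and computable in linear time.
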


\begin{proof}
By Lemma \ref{lem:sea}, there is either a sequence $s\in \probeset$ which represents the majority of sequences in $\probeset$, or $(\probeset, r, m,o)$ is a no-instance; since only at most one quarter of sequences in $\probeset$ are distinct from $s$, the sequence $s$ is unique and can be computed in time $|\probeset|$. 

Next, by Corollary \ref{cor:consecutive} (and the symmetric statement for rows), we can assume without loss of generality that all $s$-uniform columns and all $s$-uniform rows are placed consecutively in $\placement$. Notice that in this case only the first and last $o$ columns and rows can be non-$s$-uniform. Since any sequence $q$ distinct from $s$ can only be placed in columns and rows that are not $s$-uniform, the number of possibilities for $\placement(q)$ is bounded by $4o^2$. 

We now summarize the algorithm. First, we find $s$ in time $|\probeset|$. Second, for each of the at most $o^2$ sequences $q$ distinct from $s$ we branch on the at most $4o^2$ possible values of $\placement(q)$, resulting in a placement $\placement$. Third, for each such choice of $\placement$ we use the algorithm for $\PBMP_{c,o}$ from Theorem~\ref{thm:PBMPco} to find an optimal deposition sequence $D$ and store the obtained $\borderlen(\placement,D)$. Finally, we choose a tuple $(\placement,D)$ with a minimum $\borderlen(\placement,D)$. The bound on the running time follows from Theorem~\ref{thm:PBMPco}.
\end{proof}

\begin{lemma}
\label{lem:BMPcotwo}
There is an algorithm which solves any instance $(\probeset, r, m,o)$ of $\BMP_{c,o}$ such that $m>2o$ and $r\leq 2o$ in time $n\cdot c^{o^{\bigO{o}}}$.
\end{lemma}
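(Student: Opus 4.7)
The plan is to combine the primal-sequence enumeration from Theorem~\ref{thm:PBMPco} with the template/ILP machinery from Theorem~\ref{thm:BMPclr}. The main difficulty is that $\ell$ is not a parameter here, so Lemma~\ref{lem:bounds} no longer bounds the length of a good deposition sequence and Observation~\ref{obs:columntypes} does not bound $|\cT|$ in terms of $o$; both issues will be resolved by first reducing to instances with few distinct probes.

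First I would rule out the trivial \no-instances: if $\probeset$ contains more than $o+1$ distinct probes, Observation~\ref{obs:bounded_num_probes} gives an immediate \no-answer. Otherwise let $p\le o+1$ denote the number of distinct probes. Next, I would enumerate all candidate good deposition sequences $D$ by iterating over the at most $o\cdot c^o$ primal sequences (Observation~\ref{obs:num_prim_dep_seq}) and using Lemma~\ref{lem:primDepSeq2DepSeq} to extend each one, in time $\bigO{o\cdot n}$, into its unique associated good $D$ (or to discard it).

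For each surviving $D$, I would run essentially the pipeline from the proof of Theorem~\ref{thm:BMPclr}. Since $p\le o+1$ and $r\le 2o$, the set $\cT$ of distinct column placements satisfies $|\cT|\le (o+1)^{2o}=o^{\bigO{o}}$, even though individual probes may be long. By Corollary~\ref{cor:consecutive} it suffices to consider consecutive placements; I would branch over the at most $|\cT|^{|\cT|}=c^{o^{\bigO{o}}}$ templates $Q_f$, compute $\horizontalcost$ for each, and invoke the \textsc{$p$-OPT-ILP} of Theorem~\ref{thm:BMPclr} to pick a multiplicity function $h$ minimizing $\verticalcost$. The algorithm finally returns a pair $(\placement,D)$ attaining the minimum value of $\verticalcost+\horizontalcost$.

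For the running time, multiplying the $o\cdot c^o$ deposition-sequence candidates by the $c^{o^{\bigO{o}}}$ templates and by the ILP cost yields the claimed bound: the ILP has at most $|\cT|=o^{\bigO{o}}$ variables and input of size polynomial in $n$, so Theorem~\ref{thm:pilp} solves it in time $c^{o^{\bigO{o}}}\cdot n$, and the overall cost comes out to $n\cdot c^{o^{\bigO{o}}}$. Correctness is inherited from Theorem~\ref{thm:BMPclr}: any optimal solution uses a good deposition sequence (Observation~\ref{obs:good}) that arises from some primal sequence and, without loss of generality, a consecutive placement matching some enumerated template, on which the ILP then produces optimal multiplicities. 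I expect the main conceptual step to be the observation that once instances with many distinct probes are discarded, $|\cT|$ is controlled purely by $o$; after that, the rest of the argument is a direct reuse of the earlier framework.
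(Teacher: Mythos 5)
Your proposal is correct and follows essentially the same route as the paper: branch on the $O(o\cdot c^o)$ primal sequences and recover the associated good deposition sequences via Lemma~\ref{lem:primDepSeq2DepSeq}, use Observation~\ref{obs:bounded_num_probes} to bound the number of distinct column placements by $o^{O(o)}$, and then reuse the template/ILP machinery of Theorem~\ref{thm:BMPclr}. Your explicit preliminary rejection of instances with more than $o+1$ distinct probes is a detail the paper leaves implicit, but otherwise the two arguments coincide.
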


\begin{proof}
By Observation~\ref{obs:bounded_num_probes}, we obtain that the number of distinct column placements is bounded by $o^r\leq o^{2o}$.

Now we reuse the algorithm given in the proof of Theorem~\ref{thm:BMPclr} with the only difference that in Step~\ref{step1} we branch on primal sequences and compute the corresponding (good) deposition sequence in polynomial time. The number of primal sequences is bounded by $o\cdot c^o$ (Observation~\ref{obs:num_prim_dep_seq}), the time required to compute the corresponding deposition sequence is bounded $O(o\cdot n)$ by Lemma~\ref{lem:primDepSeq2DepSeq}. For each fixed deposition sequence, the running time of steps 2--4 of the algorithm in Theorem \ref{thm:BMPclr} is bounded by $c^{o^{O(o)}}$, and hence the runtime bound of $o^{2o}\cdot (o\cdot n+n\cdot c^{o^{O(o)}})=n\cdot c^{o^{O(o)}}$.
\end{proof}

\begin{theorem}
\label{thm:BMPco}
$\BMP_{c,o}$ is fixed parameter tractable, and there exists an algorithm for $\BMP_{c,o}$ which runs in time $n\cdot c^{o^{\bigO{o}}}$.
\end{theorem}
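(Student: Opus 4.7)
The plan is to handle the theorem by a case analysis on the dimensions $r$ and $m$ with respect to the threshold $2o$, dispatching each case to an earlier lemma or to a direct brute-force argument. Before the case split, I would preprocess by invoking Observation~\ref{obs:bounded_num_probes}: if $\probeset$ contains more than $o+1$ distinct probes, immediately reject, since no solution of border length at most $o$ can exist.

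For the three main cases: (i) if both $r > 2o$ and $m > 2o$, apply Lemma~\ref{lem:BMPcoone} directly; (ii) if $r \leq 2o$ and $m > 2o$, apply Lemma~\ref{lem:BMPcotwo} directly; (iii) if $r > 2o$ and $m \leq 2o$, first transpose the array (a transformation that preserves the neighbor relation and hence the border length of every placement/deposition pair) and then apply Lemma~\ref{lem:BMPcotwo} to the transposed instance.

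The remaining case is (iv) $r \leq 2o$ and $m \leq 2o$, which is not directly covered by either lemma. Here the array has at most $4o^2$ cells, and combined with the preprocessing step there are at most $o+1$ distinct probes to distribute, so the number of distinct placements is bounded by a function of $o$ alone (roughly $(o+1)^{4o^2}$). For each such placement $\placement$, I would invoke Theorem~\ref{thm:PBMPco} to find an optimal deposition sequence together with the associated border length, and then select the tuple with minimum border length across all placements.

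The primary obstacle is making sure the case split is exhaustive — in particular, that case (iv) is not silently dropped by simply citing Lemmas~\ref{lem:BMPcoone} and~\ref{lem:BMPcotwo} — and that the transposition argument in case (iii) is properly justified by the symmetry of $\borderlen$ under swapping coordinates. Once this is established, combining the runtimes of Lemmas~\ref{lem:BMPcoone} and~\ref{lem:BMPcotwo} with the brute-force bound in case (iv) yields an overall runtime of $n \cdot c^{o^{\bigO{o}}}$, as claimed.
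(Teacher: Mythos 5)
Your proposal matches the paper's proof essentially exactly: the same case split on $r,m$ versus $2o$, dispatching to Lemmas~\ref{lem:BMPcoone} and~\ref{lem:BMPcotwo} (with the transposed case handled by symmetry), and brute-forcing over all placements in the small-array case before invoking Theorem~\ref{thm:PBMPco}. The only cosmetic differences are your explicit preprocessing via Observation~\ref{obs:bounded_num_probes} and your $(o+1)^{4o^2}$ bound on placements where the paper uses $(4o^2)!$; both are functions of $o$ alone and yield the claimed runtime.
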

\begin{proof}
In case $m>2o$ and $r>2o$ we use the algorithm described in the proof of Lemma~\ref{lem:BMPcoone}.
In case $m>2o$ and $r\leq 2o$ (or, by symmetry, if $m\leq 2o$ and $r>2o$) we use the algorithm described in the proof of Lemma~\ref{lem:BMPcotwo}.
In case $m\leq 2o$ and $r\leq 2o$ we branch over all of the at most $(4o^2)!$ placements $\placement$, resulting in at most $(4o^2)!$ instances of $\PBMP_{c,o}$ which can be solved individually in time $\bigO{oc^o\cdot (n+o^2)}$ by Theorem~\ref{thm:PBMPco}.
\end{proof}

\section{Conclusion}
\label{sec:conclusions}
In this work we considered the parameterized complexity of $\BMP$ and $\PBMP$, two fundamental problems related to the optimal design of microarrays, with respect to combinations of parameters centered around the number of distinct characters $c$.
We presented fpt-algorithms for both $\BMP$ and $\PBMP$ if the maximum probe length and the number of rows are viewed as additional parameters ($c,\ell,r$); and if the border length is the additional parameter ($c,o$).
In addition, we showed that $\PBMP$ parameterized by $c$ and $\ell$ is in FPT.
For $c,r$ (and also $c$ alone) we showed \paraNP-hardness for both $\BMP$ and $\PBMP$.
Hence, under the usual complexity theoretic assumptions, one cannot hope to find an fpt-algorithm for these~settings.

On our agenda for future work is to settle the question whether there is an fpt-algorithm for $\BMP$, parameterized by $c,\ell$. Another direction for future research is to study further (structural) parameters for these two problems. Furthermore, in our complexity analysis we plan to consider more sophisticated target functions that take other criteria in addition to the border length into account.

\bibliographystyle{abbrv}
\bibliography{literature-arxiv} 

\end{document}